\let\TeXyear\year
\let\year\TeXyear
\definecolor{accessblue}{cmyk}{1 0.3 0 0.2}
\definecolor{greycolor}{cmyk}{0,0,0,.8}
\definecolor{bluelight}{cmyk}{1,0.3,0,0.2}
\newcites{r}{Reviewed Literature}
\newcommand{\sectiondir}{./sections/}
\newcommand{\bibdir}{./}
\newcommand{\gitlab}{\texttt{GitLab}\xspace}
\newcommand{\N}{\mathbb{N}}
\renewcommand{\theta}{\vartheta}
\newif\ifanonymous
\newif\iffinal
\newcommand{\TODO}[1]{}
\newcommand{\short}[1]{}
\newcommand{\FIXME}[1]{}
\newcommand{\stefan}[1]{}
\newcommand{\daniel}[1]{}
\newcommand{\sophia}[1]{}
\newcommand{\tobi}[1]{}
\newcommand{\mycomment}[3][blue]{}
\newcommand{\TODO}[1]{\vspace{0.5em}\todo[inline, color=orange!30]{\textbf{TODO:} #1}}
\newcommand{\FIXME}[1]{\todo[size=\small, color=red!30]{\textbf{FIXME:} #1}}
\newcommand{\short}[1]{\todo[inline,color=gray!30,size=\small]{Content:#1}}
\newcommand{\stefan}[1]{{\color{blue!80}{stefan: #1}}}
\newcommand{\daniel}[1]{{\color{purple}{daniel: #1}}}
\definecolor{dartmouthgreen}{rgb}{0.05, 0.5, 0.06}
\newcommand{\sophia}[1]{{\color{dartmouthgreen}{sophia: #1}}}
\newcommand{\tobi}[1]{{\color{orange}{tobi: #1}}}
\newcommand{\mycomment}[3][blue]{\marginpar{\small\textcolor{#1}  {#2: #3}}}
\newcommand*\rel@kern[1]{\kern#1\dimexpr\macc@kerna}
\newcommand*\widebar[1]{%
  \begingroup
  \def\mathaccent##1##2{%
    \rel@kern{0.8}%
    \overline{\rel@kern{-0.8}\macc@nucleus\rel@kern{0.2}}%
    \rel@kern{-0.2}%
  }%
  \macc@depth\@ne
  \let\math@bgroup\@empty \let\math@egroup\macc@set@skewchar
  \mathsurround\z@ \frozen@everymath{\mathgroup\macc@group\relax}%
  \macc@set@skewchar\relax
  \let\mathaccentV\macc@nested@a
  \macc@nested@a\relax111{#1}%
  \endgroup
}
\theoremstyle{plain}
\newtheorem*{fact}{Fact}
\newtheorem{theorem}{Theorem}
\newtheorem{lemma}{Lemma}
\newtheorem{corollary}{Corollary}
\theoremstyle{definition}
\newtheorem{definition}{Definition}
\definecolor{orcidlogocol}{HTML}{A6CE39}
\newcommand{\orcid}[1]{\href{https://orcid.org/#1}{\includegraphics{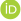}}}
\newcommand{%
  \resizebox{\width}{!}{\input{}}%
}[2][\width]{%
  \resizebox{#1}{!}{\input{#2}}%
}
\renewcommand{\orcid}[1]{}
\begin{document}


\title{On the Formalization of Cryptographic Migration}

\ifanonymous
\IEEEauthorblockN{Anonymized.}
\else
\author{
\IEEEauthorblockN{
Daniel Loebenberger\IEEEauthorrefmark{1}\IEEEauthorrefmark{3}\orcid{0000-0002-7969-6260},
Eduard Hirsch\IEEEauthorrefmark{3}\orcid{0000-0001-9593-2342},
Stefan-Lukas Gazdag\IEEEauthorrefmark{2}\orcid{0009-0002-1338-9484},
Daniel Herzinger\IEEEauthorrefmark{2}\orcid{0009-0009-3775-9567},
Christian Näther\IEEEauthorrefmark{4}\orcid{0009-0000-0634-817X},
Jan-Philipp Steghöfer\IEEEauthorrefmark{4}\orcid{0000-0003-1694-0972}~\IEEEmembership{Member, IEEE}
}

\IEEEauthorblockA{\IEEEauthorrefmark{1}Fraunhofer Institute for Applied and Integrated Security, Munich, Germany}
\IEEEauthorblockA{\IEEEauthorrefmark{2}genua GmbH, Kirchheim near Munich, Germany}
\IEEEauthorblockA{\IEEEauthorrefmark{3}OTH Amberg-Weiden, Amberg, Germany}
\IEEEauthorblockA{\IEEEauthorrefmark{4}XITASO GmbH, Augsburg, Germany}

\thanks{Manuscript created 31 July, 2025; This work was funded by the German Federal Ministry of Education and Research.}
}

\markboth{Loebenberger et al. ``On the Formalization of Cryptographic Migration''}{Loebenberger et al. ``On the Formalization of Cryptographic Migration''}

\fi

\maketitle

\begin{abstract}
We present a novel approach to gaining insight into the structure of cryptographic migration problems which are classic problems in applied cryptography. We use a formal model to capture the inherent dependencies and complexities of such transitions. Using classical mathematical results from combinatorics, probability theory, and combinatorial analysis, we evaluate the challenges of migrating large cryptographic IT infrastructures and prove that\,--\,in a suitable sense\,--\,cryptographic migration exhibits a certain expected complexity. We also provide numerical data for selected parameter sets. Furthermore, we analyze the proposed model in terms of real-world patterns and its practical applicability. Additionally, we discuss the challenges of modeling real-world migration projects. As concrete examples we examine the transition to post-quantum cryptography of the CI/CD system \gitlab and the multi-level technological transition of distribution power grids. This work paves the way for future advancements in both the theoretical understanding and practical implementation of cryptographic migration strategies.
\end{abstract}

\begin{IEEEkeywords}
cryptographic migration, formal model, hardness result, post-quantum transition
\end{IEEEkeywords}






\section{Introduction}
\label{sec:introduction}

\IEEEPARstart{C}{ryptography} is fundamental to ensuring information security in today's digital world and is thus an essential cornerstone of many modern dependable systems. Despite advances in cryptographic technologies, the process of migrating existing systems and IT infrastructures to new standards presents significant challenges~\cite{barpol2021nist}. One example is the transition from IPv4 to IPv6, which took more than 20 years and promoted technologies such as gateways and network address translation (NAT) to circumvent the actual migration. In the cryptographic realm, notable examples include updating key sizes (where RSA key sizes increased, as did integration problems) and transitioning between different hash functions (from SHA-1 to the SHA-2 family, and eventually to the SHA-3 family), with the SHA-3 family being adopted slowly as SHA-2 is still considered sufficient for most tasks. Rapid transition appears to require a strong incentive to adopt new technology. A notable exception is the transition from the symmetric cipher DES to its replacements AES, driven by the need for stronger security measures due to DES's vulnerability to brute-force attacks~\cite{Patil16}.
The issue of migration is particularly prominent in the current transition towards post-quantum cryptography (PQC), which aims to protect data against future quantum computing threats. The German Federal Office for Information Security (BSI) uses the working hypothesis for the high-security sector that there likely exists a cryptographically relevant quantum computer in the early 2030s~\cite{bsi24statusQC}. Mosca's famous inequality~\cite{mosca2018-will-we-be-ready} states that there is a 1 in 2 chance that such a computer will exist in 2031. Thus, there are numerous practical post-quantum migrations going on world-wide. The reason is that we are already running-out of time: Sensitive (e.\,g., classified) data often has long confidentiality requirements (e.\,g., 25 years) and it might be possible to decrypt cryptographically protected data which is recorded today in the future. In addition, migrating the IT-infrastructure correspondingly takes a considerable amount of time~\cite{pqc_adoption}. 
%

Our approach to support PQC migration is to model the migration problem suitably and identify this way optimal migration strategies in general. We propose a systematic analysis to better understand the structural problems and explore which potential models are necessary for building tools to assist actual PQC migrations. Although algorithmic aspects and the details of the concrete tooling are beyond the scope of this work, we believe that a structured assessment of the migration problem provides valuable insights into the structural complexties involved.

Our findings show that the dependencies of a system to be migrated usually necessitate a compatible migration order to avoid losing functionality. Within this order, there are always sets of systems which need to be migrated together\,--\,so called \emph{migration clusters} (cf.~Definition~\ref{def:migration-cluster})\,--\,and their successive migration defines the corresponding migration strategy. As soon as the systems under evaluation exhibit non-trivial complexity, we have many or large clusters, and these clusters are expected to be interdependent, making parallelization of the different steps impossible and migration thus difficult. To facilitate the process while still allowing for rapid migration, we need tools that can identify and visualize optimal migration strategies. In order to cope with the complexity of developing such tools, we believe that a proper understanding of the underlying problem is necessary. Our work presented here provides the formal basis for this.

\subsection*{Contributions}

Our contributions are as follows:
\begin{enumerate}
  \item We introduce a novel graph-based formalization of cryptographic migration, along with a discussion of the structure of the resulting model (Section~\ref{sec:formal_model}).
  \item We then prove that any (reasonably large) migration project has an expected complex dependency structure and thus show that this structure prohibits parallelizing the migration (Section~\ref{sec:migration_hard}).
  \item We explore typical graphs and patterns that occur in practice and how these patterns can support system administrators in planning a migration project (Section~\ref{sec:migration-modelling}).
  \item We demonstrate how our model can be applied to an actual migration problem as an exemplar to showcase its ability to draw conclusions about real-world migrations. We also discuss how multiple graphs on different levels of abstraction can be combined when planning the migration of a complex critical infrastructure system (Section~\ref{sec:real-world-examples}).
\end{enumerate}
 
Overall, the formalization proposed in this paper is an important step toward a better, more systematic understanding of migration problems. It also lays the groundwork for developing tools that will support engineers in the complex cryptographic migrations that lie ahead in the coming years.

\section{Background and Related Work}
\label{sec:related_work}

\subsection{Background}


After the announcements of the first post-quantum public key algorithms for consideration as an international standard by the National Institute of Standards and Technology (NIST)~\cite{fips203, fips204, fips205}, the transition to PQC has started in practice: in the US, there is a project run by NIST with a large consortium of industry-leading companies that explores the transition to PQC~\cite{barpol2021nist}. In Europe, the German Federal Office for Information Security (BSI) published a guideline concerning post-quantum migration~\cite{bsi_migration} and also included certain post-quantum algorithms in their recommended suite of cryptographic algorithms~\cite{bsi-tr-02102-1}. The Dutch government also published a ``PQC Migration Handbook'' containing guidelines for migrating to PQC~\cite{PQC-migration-handbook-nl2023}.
%
%
%

%
%
%
%
%
%
Despite these world-wide activities, we found little research on the general structure of migration projects and how to suitably model cryptographic migration. In fact, many people consider migrating soft- and hardware components as a best-practice task. So, people legitimately ask ``Where Is the Research on Cryptographic Transition and Agility?'' as David Ott did at RWC2022 (see also~\cite{ott2023research}). We show in Section~\ref{sec:migration_hard} that there are structural reasons why cryptographic migration is so hard to achieve in practice.

Ott et al.~define the term \emph{cryptographic agility} as ``the ability to make a transition from one cryptographic algorithm or implementation to
another''~\cite{ott2023research}. Our own, canonical definition~\cite{naether24sok} states that ``Cryptographic Agility is a theoretical or practical approach, objective, or property which provides capabilities for setting up, identifying, and modifying encryption methods and keying material in a flexible and efficient way while preserving business continuity.'' Adopting cryptographic agility in this sense ensures the flexibility to transition between algorithms or implementations in response to evolving threats, see Section~\ref{sec:real-world-examples:crypto-agility} for more details. Thus, in order to implement cryptographic agility we need to prepare cryptographic migration. In turn, for preparing migration, we need a step-by-step migration plan which must allow careful re-evaluation of dependencies and system integrity at each stage. This kind of approach is in line with making security solutions sustainable by allowing to adapt to changing threats and technologies.

\subsection{Related Work}

\ifanonymous Recent \else Our recent \fi efforts have focused on consolidating the state-of-the-art in cryptographic
migration~\cite{naether2024migrating} and systematically organizing knowledge on
cryptographic agi\-li\-ty~\cite{naether24sok}. Furthermore, frameworks arose that aim to
support the migration process as in~\cite{wiesmaier2021pqc}. Also, there are few
informal models of the migration problem (e.\,g.,~\cite{ott2019identifying, netwie24, schhen24, etsi24cyberqsc}).
All these models have in common that they address operational aspects of cryptographic
migration in various institutional contexts. This can be aligned with the above observation
of structuring best-practice tasks. While these kind of analyses are clearly important and
can serve as a basis for practical migration, they do not attempt to formally assess the
underlying structural problem. Instead they give guidelines on how to assess the needed transition efforts,
explain how to perform a suitable risk-analysis, and describe process landscapes to perform cryptographic transition.

One way of collecting the information that is necessary to build a migration model is to construct a software bill of materials (SBOM)~\cite{odonoghue2025softwarematerialssoftwaresupply} or a cryptographic bill of materials (CBOM)~\cite{leirimaa2024supporting,hess24cboms}. While the SBOM contains information about which software in which version is used and can also include information about dependencies that are relevant for cryptography, the CBOM contains information about the concrete cryptographic mechanisms in use. This information can be used in conjunction to plan a migration, e.\,g., by identifying vulnerable protocols in the communication between two assets in the CBOM and then using the SBOMs of these assets to locate the software that should be migrated.

Building an SBOM or a CBOM requires extensive tooling for \mbox{(semi-)}automated collection. There is a plethora of relevant tools available\footnote{see for example \url{https://perma.cc/N8WE-QBTZ}} that help either on the system level or on the network level to collect the desired information of the infrastructure. The results are typically vast: even for small setups, such an automated assembly can easily yield tens of thousands of components.

However, complex migration projects require many successive steps and in-depth strategic planning. 
Based on an SBOM/CBOM alone, only a simple risk-assessment and migration of components\,--\,without formal structure\,--\,based only on the highest risk exposure or when updates are available is feasible. We believe this is insufficient since it ignores the complexity of migration projects that should be based on more information than the SBOM and CBOM provide, including information about exposure, and should be based on a structured plan with a comprehensive overview rather than on ad-hoc interventions.

Initial standardization attempts with the same aim include frameworks like CARAF, a risk management approach for post-quantum migration~\cite{ma2021caraf}. In this framework, risks are identified, an asset inventory is set up and the risk is estimated for each asset, yielding an organizational road-map. Relating that to our migration model, after the asset inventorization we also get a list of relevant components (i.\,e., the nodes within the graph), but the CARAF approach does not provide any information about the dependencies between the components.

The draft ``Repeatable Framework for Quantum-safe Migrations'' of the framework ETSI published~\cite{etsi24cyberqsc} approaches the migration problem from an organizational perspective. It assumes that the responsibility for migration projects is assigned internally while some form of governance regulates the interdependence of the parties involved. Local dependency structures are described in part, but an analysis of longer chains of dependencies is not conducted. Therefore, only local dependencies can be captured by this framework.

As far as we know, a formal treatment for the migration problem is still pending: While Hassim et al.~\cite{hassim24framework} introduce a solid starting point for a PQC migration model by analyzing certain local dependencies in real-world migration projects, these authors do not work out the resulting composed dependency structures found in real-world interconnected systems. Instead they provide a practical framework focused on cryptographic and data inventories and a local dependency analysis within enterprise systems. Also, their analysis primarily addresses operational aspects.

Building on this groundwork, our efforts enhance the theoretical rigor by formalizing migration challenges through mathematical models. Our generalization even addresses broader migration scenarios beyond PQC-specific contexts.
We believe that our migration model together with its analysis and study of real-world applicability can substantially contribute to the understanding of the formal foundations of cryptographic migration.

In any case, combining different kinds of information, carving out relations and creating a model of the migration graph is still a lot of manual work and is currently not fully supported by tooling. In future work we envision tooling to support the creation and maintenance of migration graphs as well as their analysis.

%

%
%


\section{A Model for the Cryptographic Migration Problem}
\label{sec:formal_model}

Having motivated the need for a rigorous treatment of cryptographic migrations, we formulate a migration project as a directed graph whose vertices represent concrete software, hardware or organizational components which need to be migrated. Its edges capture dependencies that constrain the order of those migrations. We introduce an intuitive model, moving to a precise definition that introduces the key notions of strongly-connected migration clusters, migratable components, and the scheduling strategies that remove such clusters step-by-step. This formal framework sets the stage for the complexity analysis in Section~\ref{sec:migration_hard} and underlies the practical patterns developed later in the paper.

\subsection{Informal Description of the Model}

Any IT infrastructure we migrate consists of a number of \emph{migration components} or simply \emph{components} which need to be migrated. For an overview of which components we might encounter in practice see Section~\ref{sec:real-world-patterns}. The \emph{migration problem} is now to migrate all migration components. However, not every component can be migrated at every point in time, because there might be dependencies between the components. For instance, an application can migrate to post-quantum algorithms only after the cryptographic libraries it relies on have been upgraded. Another example is the migration of a communicating software component, which can only be migrated if the corresponding counterparts of the communication are migrated at the same time or if we introduce some kind of interoperability. In this case the migration can be done (if there are no other dependencies) if all communicating parties of this single migration component are migrated at the same time. We call this a \emph{migration cluster}. Components without dependencies outside its migration cluster are called \emph{migratable components}.
The \emph{migration process} looks as follows:
\begin{enumerate}
\item In every migration step we select a migratable component, following a suitable \emph{migration strategy}.
\item We then compute its migration cluster, i.\,e., possible interdependent components that need to be migrated at the same point in time.
\item Then, the actual migration of the migration cluster is performed and the successfully migrated migration cluster is removed from our model (together with all in-going dependencies).
\item This process is repeated until there is no migration component left and we have then solved the migration problem for this specific IT infrastructure.
\end{enumerate}
The whole model is now captured in a directed graph whose vertices are the migration
components and the edges are the dependencies between components, the \emph{migration graph} of the migration problem, which we will detail in this section.
Note that a migration graph's structure will vary significantly depending on the concrete IT infrastructure and the kind of model we have in mind. But also the abstraction layer and the type of migration lead to many design choices of the model, cf.~Section~\ref{sec:migration-modelling}.
The resulting graphs will show parallels to those that already exist in several disciplines of computer science such as flowcharts, call graphs, and finite-state machines in programming, network diagrams, or software dependency graphs.

We expect that post-quantum migration is the most significant type of cryptographic migration of IT infrastructures in the near future\ifanonymous\else~\cite{naether2024migrating}\fi. Its specific properties introduce aspects that were less prevalent in the classical cryptography world, where examples are hybrid cryptographic schemes and more complex instantiations of cryptographically agile solutions.
%
In a graph which only models systems and their protocol-related communication relations in between, one may only treat this implicitly. Yet again, while graphs that take other dependencies into account usually become harder to model, they sometimes better reflect the complicated circumstances of the real world.

Also, regulatory requirements of post-quantum solutions and different views on current and future algorithm recommendations may become significantly more complicated, especially in a multinational context.
All in all, post-quantum migration is the scenario that motivated our formal migration model. At the same time, it implicates several special properties which make modeling even more complex as we demonstrate in Section~\ref{sec:real-world-patterns}.

We now specify the formal model. The concepts concerning directed graphs are partially quite standard, nevertheless we specify for the sake of completeness the full notation. Novel and unprecedented in the literature is the application to migration problems. The reader interested in practical applicability might want to skip the following subsections and the complexity proofs of Section~\ref{sec:migration_hard}, and continue reading in Section~\ref{sec:migration-modelling}.

\subsection{Formal Definition of Migration Graphs}
\label{sec:graph-model}

Consider a finite set $V$ of components in a large IT infrastructure. If the migration of a component $v \in V$ depends on the migration of a component $w \in V$, we write $v \to w$. In other words: We cannot migrate $v$ before migrating $w$ (but possibly at the same time). Collecting all such dependencies in a set $E$ (without loops $v \to v$) gives us the \emph{migration graph} $G = (V,E)$ modeling the \emph{migration project} we have in mind. Canonically, we extend the relation $v \to w$ to sets $M_1, M_2 \subseteq V$ and write $M_1 \to_G M_2 := \Set{ v \to w ; v \in M_1, w \in M_2} \subseteq E$. If $G$ is clear from the context, we omit the subscript and only write $M_1 \to M_2$. For a dependency $e = v \to w \in E$ we write $~source(e) := v$ and $~target(e) := w$, again with canonical extension to sets. For later use, we write for a set $M \subseteq V$ for the \emph{restricted graph} $G|_M = (M, M \to M)$. The \emph{direct dependencies} of $v \in V$ are precisely given by the set $~target(v \to V)$.
We define the \emph{dependencies of $v$} as $~dep(v) :=~target(v \to^* V)$, where $\to^*$ is the reflexive and transitive closure of the relation $\to$. 
%
A simple example of a migration graph can be found in Fig.~\ref{fig:example_graph}.
\begin{figure*}[tb]
\centering

\begin{tikzpicture}[auto,
            > = Stealth,
every edge quotes/.style = {font=\footnotesize}, 
every edge/.append style = {->, draw=bluelight, thick},
every loop/.append style = {<-, looseness = 12},
node distance = 15mm,
 state/.style = {circle, semithick, fill=bluelight, draw=bluelight, text=white, minimum size=2.5em},
 initial text = ,]

\node (L) [state] {$v_{10}$};
\node (K) [state, below=.9cm of L] {$v_{11}$};
\node (J) [state, below=.9cm of K] {$v_{12}$};
\node (I) [state, right=of K] {$v_9$};
\node (H) [state, below right=of I] {$v_8$};
\node (G) [state, above right=of I] {$v_7$};
\node (F) [state, right=of H] {$v_6$};
\node (E) [state, right=of G] {$v_5$};
\node (D) [state, below right=of E] {$v_4$};
\node (B) [state, right=of D] {$v_2$};
\node (C) [state, below=.9cm of B] {$v_3$};
\node (A) [state, above=.9cm of B] {$v_1$};

\path   (J) edge (I)
	(K) edge (I)
	(L) edge (I)
        (H) edge (I)
        (G) edge (H)
        (I) edge (G)
        (H) edge (F)
        (G) edge (E)
        (F) edge [bend left=10] (E)
        (E) edge [bend left=10] (F)        
        (F) edge (D)
        (E) edge (D)
        (D) edge (C)
        (D) edge (B)
        (D) edge (A)
        ;

\end{tikzpicture}

\caption{An example of a simple migration graph on twelve components $v_1, \dots, v_{12}$.\label{fig:example_graph}}
\end{figure*}

\subsection{Migration Clusters and Migratable Components}

\begin{definition}[Migration Cluster]
\label{def:migration-cluster}
Let $G = (V,E)$ be a migration graph and $v \in V$. The \emph{migration cluster $c(v)$ of $v$} is set of all components $w \in V$ such that $w \in~dep(v)$ and $v \in~dep(w)$.
\end{definition}
In the literature, the migration clusters are classically called the \emph{strongly connected components} of $G$.
For sake of exposition we stick to the term \emph{migration cluster} in the context of cryptographic migration.
Intuitively, the elements of $c(v)$ are precisely those components that have to be migrated at the same time as $v$ due to mutual dependencies. The migration clusters specify how complex single \emph{migration steps} of a \emph{migration project} are. Additionally, the properties of the migration clusters inside a migration graph $G$ shed light on the complexity of the resulting \emph{migration problem} which we will describe below.
In fact, we have the following important characteristics:
\begin{enumerate}
\item The \emph{size} $s$ of the cluster $s(v) := \# c(v)$.
\item The \emph{degree} $~cdeg(v) := ~argmax_{w \in G |_{c(v)}}~deg(w)$.
\end{enumerate}
Clearly, in the case $~cdeg(v) = 0$ we have $s(v) = 1$ since otherwise there would be two components $v_1 \neq v_2 \in c(v)$ with $v_1 \in~dep(v_2)$ and $v_2 \in~dep(v_1)$. But then $~dep(v_1) =~target(v_1 \to^* c(v_2))$ would be non-empty and thus $v_1 \to^* c(v_2)$ would also be non-empty. Therefore, we have $~deg_{G|_{c(v_1)}}(v_2) > 0$, contradicting our assumption. Conversely, if $s(v) = 1$ then $~cdeg(v) = 0$ since $G|_{c(v)}$ consists of a single component only and has no edges. In practice this case occurs in the situation where we have a single component that simply can be migrated without any dependencies on other components. This is the easiest case in a migration process.

In the case $~cdeg(v) = 1$ the migration cluster $c(v)$ is a cycle. In practice this case often occurs for $s(v) = 2$ when there are two components that communicate using the same protocol. A migration of the protocol can only be done successfully if both components are migrated at the same time. Cycles of larger sizes are uncommon in practice, cf.~Section~\ref{sec:real-world-patterns}.
The set $m(v)$ of \emph{migratable components around $v$} is given by
\[
m(v) = \begin{cases}
c(v) & \text{, if $c(v) \to \bar{c}(v) = \emptyset$}\\
\emptyset & \text{, otherwise}\\
\end{cases}
\]
where we write $\bar{c}(v) = V \setminus c(v)$. The set $m(v)$ specifies exactly which components can be migrated in $G$ when we migrate $v \in V$. If $v$ has dependencies outside its migration cluster, i.\,e., $v$ cannot be migrated, then $m(v) = \emptyset$. If there are no direct dependencies of $v$, then $m(v) = \Set{v}$. Also, in all other cases $m(v) = c(v)$, i.\,e., the whole migration cluster. If $m(v) \neq \emptyset$, we call $v$ \emph{migratable} (in $G$).

Now, in a migration project, there might be several subsequent steps to fully migrate all involved components, i.\,e., solve the migration problem the migration graph poses. We write $G^{(v)} = (V \setminus m(v), E \setminus (V \to m(v)))$ for the graph after a migration attempt of $v$. If $v$ is not migratable, then $G^{(v)} = G$. Generally, $G^{(v)}$ is exactly the graph  restricted to the components outside the migratable components around $v$. Using similar to above the notation $\bar{m}(v) = V \setminus m(v)$, we can thus alternatively write $G^{(v)} = G|_{\bar{m}(v)} = (\bar{m}(v), \bar{m}(v) \to \bar{m}(v))$. We now show
\begin{lemma}
\label{lemma:migratable}
In any (finite and non-empty) migration graph $G = (V,E)$ there is at least one component $v \in V$ that is migratable.
\end{lemma}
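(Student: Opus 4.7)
The plan is to exploit the classical fact that the partition of $V$ into migration clusters (i.e., strongly connected components) induces a well-defined \emph{condensation} that is acyclic, and then to extract a sink of that condensation.

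First I would observe that the relation $w \sim v \iff w \in~dep(v) \wedge v \in~dep(w)$ is an equivalence relation on $V$ (reflexive and transitive because $\to^*$ is; symmetric by construction), whose equivalence classes are exactly the migration clusters $c(v)$. Next I would define a relation $\Rightarrow$ on the set $\mathcal{C}$ of clusters by $C_1 \Rightarrow C_2$ iff $C_1 \neq C_2$ and there exist $v \in C_1$, $w \in C_2$ with $v \to w$. The key step is to verify that the directed graph $(\mathcal{C}, \Rightarrow)$ is acyclic: a hypothetical cycle $C_1 \Rightarrow C_2 \Rightarrow \dots \Rightarrow C_k \Rightarrow C_1$ would, by the definition of $\Rightarrow$ and transitivity of $\to^*$, give for any representatives $v_i \in C_i$ both $v_1 \in~dep(v_2)$ and $v_2 \in~dep(v_1)$, forcing $C_1 = C_2$, a contradiction.

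Since $V$ is finite and non-empty, $\mathcal{C}$ is finite and non-empty, and a finite non-empty DAG admits a sink; let $C^{*} \in \mathcal{C}$ be such a sink, i.e.\ there is no $C' \in \mathcal{C}$ with $C^{*} \Rightarrow C'$. Pick any $v \in C^{*}$. Then $c(v) = C^{*}$, and by the sink property every edge $v' \to w$ with $v' \in c(v)$ must satisfy $w \in c(v)$; equivalently $c(v) \to \bar{c}(v) = \emptyset$. By the definition given immediately before the lemma, this means $m(v) = c(v) \neq \emptyset$, so $v$ is migratable.

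The argument is largely a repackaging of the standard ``condensation is a DAG, hence has a sink'' lemma, so I do not expect any real obstacle; the only point requiring some care is the bookkeeping at the interface between the paper's notation ($~dep$, $c(v)$, $c(v) \to \bar{c}(v)$) and the classical SCC vocabulary, to make clear that a sink cluster is exactly one with no outgoing cross-cluster edges and thus yields a migratable representative.
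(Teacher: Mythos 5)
Your proof is correct and rests on the same idea as the paper's: the cross-cluster dependency relation is acyclic, so a finite non-empty migration graph must contain a cluster with no outgoing cross-cluster edges, and any component of that cluster is migratable. The paper phrases this as an iterative descent by contradiction (repeatedly following a dependency out of the current cluster until the process must terminate in a cluster with no external dependencies), whereas you package the same content directly as ``the condensation is a DAG and hence has a sink''\,---\,which is essentially the statement of the paper's later Lemma~\ref{thm:acyclic_condensation}\,---\,so yours is a clean, equivalent reorganization rather than a genuinely different argument.
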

\begin{proof}
Suppose there is no migratable component, i.\,e., $m(v) = \emptyset$ for all $v \in V$. By definition, $c(v) \to \bar{c}(v)$ is always non-empty, i.\,e., there is always a dependency to a component outside the migration cluster of $v$. Call such a dependency $w \in \bar{c}(v)$. Since $w$ is outside the migration cluster of $v$, $w$ cannot depend on elements of $c(v)$ and all dependencies of $w$ reside in the (strict) sub-graph $G|_{\bar{c}(v)}$. Repeating this process, after a finite number of steps, we end-up in a graph which only contains a single migration cluster $c(u)$ and $u \in V$ is migratable in $G$ contradicting our assumption.
\end{proof}


\subsection{Migration Strategies}
\label{sec:migration_strategies}

We will now define \emph{migration strategies of $G$}. Formally, a strategy $S = (v_1, \ldots, v_\ell) \in V^\ell$ for $\ell \in \mathbb{N}$ is a sequence of components that are migrated one by one (of course together with their respective migration clusters). $\ell$ is the \emph{length} of the strategy. For $1 \leq i \leq \ell$ denote by $G_i$ the graph after successively migrating $v_1, \ldots, v_i$, i.\,e., $G_i := G_{i-1}^{(v_i)}$, where $G_0 = G$. The strategy $S$ is \emph{reasonable} if for all $1 \leq i \leq \ell$, we have $m(v_i) \neq \emptyset$ in the graph $G_{i-1}$. That means that the component $v_i$ is in fact migratable in the graph $G_{i-1}$ and, therefore $G_{i} \subsetneq G_{i-1}$, i.\,e., the migration of $v_i$ yields a smaller graph. $S$ is \emph{successful}, if $G_{\ell}$ is the empty graph. By Lemma~\ref{lemma:migratable} there is a reasonable, successful strategy for any migration graph $G$.

\begin{figure*}[t]
\centering

\begin{tikzpicture}[auto,
            > = Stealth,
every edge quotes/.style = {font=\footnotesize}, 
every edge/.append style = {->, draw=bluelight, thick},
every loop/.append style = {<-, looseness = 12},
node distance = 15mm,
 state/.style = {circle, semithick, fill=bluelight, draw=bluelight, text=white, minimum size=2.5em},
 initial text = ,]

\node (L) [state] {$c_{7}$};
\node (K) [state, below=.9cm of L] {$c_{8}$};
\node (J) [state, below=.9cm of K] {$c_{9}$};
\node (I) [state, right=of K] {$c_6$};
\node (E) [state, right=of I] {$c_5$};
\node (D) [state, right=of E] {$c_4$};
\node (B) [state, right=of D] {$c_2$};
\node (C) [state, below=.9cm of B] {$c_3$};
\node (A) [state, above=.9cm of B] {$c_1$};

\path   (J) edge (I)
	(K) edge (I)
	(L) edge (I)
        (I) edge (E)
        (E) edge (D)
        (D) edge (C)
        (D) edge (B)
        (D) edge (A)
        ;

\end{tikzpicture}

\caption{The acyclic condensation graph $G/c$ on nine clusters $c_1, \dots, c_9$ of the migration graph from Fig.~\ref{fig:example_graph}.\label{fig:contraction}}
\end{figure*}

\begin{lemma}
\label{lemma:migration_length}
All reasonable, successful strategies $S = (v_1, \ldots, v_\ell)$ for a migration graph $G = (V,E)$ have the same length $\ell$.
\end{lemma}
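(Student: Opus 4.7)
The plan is to show that every reasonable, successful strategy removes the strongly connected components (migration clusters) of $G$ one-by-one, so that its length equals the number of clusters of $G$, which is an invariant of the graph.

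First I would recall that migration clusters coincide with strongly connected components and therefore partition $V$. Let $k$ denote the number of migration clusters of $G$. The key structural observation is that whenever $v$ is migratable in some graph $H$ occurring in the process, the set $m(v)$ equals the entire cluster $c_H(v)$, so one migration step deletes precisely one whole cluster together with its incoming edges from outside. I would then argue that the migration clusters of $H^{(v)} = H|_{\bar{c}_H(v)}$ are exactly the clusters of $H$ other than $c_H(v)$: since $c_H(v)$ is removed as a whole, for any two remaining vertices the bidirectional paths that witnessed their being in the same cluster of $H$ either already lie in $H|_{\bar{c}_H(v)}$ or would have had to pass through $c_H(v)$, but such a detour would contradict $c_H(v)$ being a maximal strongly connected set in $H$. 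Conversely, no new strongly connected set can appear, because $H|_{\bar{c}_H(v)}$ is a subgraph of $H$. Hence the number of clusters decreases by exactly one at each reasonable step.

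From this, an easy induction on $i$ shows that $G_i$ has exactly $k - i$ migration clusters (and in particular, $G_i$ is empty iff $k - i = 0$). Since $S$ is successful, $G_\ell$ is empty, so $\ell = k$. Since $k$ depends only on $G$, any two reasonable, successful strategies have the same length.

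The only subtle point, and the part I would be most careful with, is the invariance of the remaining clusters under removal of a migratable cluster; everything else is bookkeeping. One clean way to phrase this is via the acyclic condensation $G/c$ already introduced in Fig.~\ref{fig:contraction}: removing a migratable cluster corresponds to deleting a sink of $G/c$, so a reasonable successful strategy is exactly a reverse topological ordering of $G/c$, and all such orderings have the same length, namely the number of vertices of $G/c$.
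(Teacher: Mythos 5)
Your proof is correct, but it is organized differently from the paper's. The paper argues by direct comparison of two reasonable, successful strategies $S_1$ and $S_2$ with $\ell_2 \geq \ell_1$: since $S_1$ is successful and reasonable, $V$ partitions into the $\ell_1$ non-empty clusters it removes; each step of $S_2$ lands in one of these clusters, and injectively so (two steps of $S_2$ cannot hit the same cluster, because the earlier one would already have removed the later one's component), hence after $\ell_1$ steps $S_2$ has exhausted $V$ and $\ell_2 = \ell_1$. You instead identify the common length with a graph invariant, $\#(V/c)$, by showing that each reasonable step deletes exactly one strongly connected component and leaves the strongly connected components of the remainder unchanged, i.e., that a reasonable successful strategy is precisely a reverse topological ordering of the condensation. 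Both arguments ultimately rest on the same fact\,--\,that the cluster removed at step $i$, computed in $G_{i-1}$, coincides with a cluster of the original $G$\,--\,which the paper uses tacitly (in asserting that $V = \biguplus_i m(v_i)$ is a partition into migration clusters, and in the injectivity step) and which you isolate and justify explicitly by observing that removing a whole strongly connected component can neither merge, split, nor create others. Your route buys the sharper conclusion $\ell(G) = \#(V/c)$ for free, a fact the paper states only afterwards and without proof; the paper's route avoids formalizing the invariance of the cluster decomposition, at the cost of leaving that step implicit.
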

\begin{proof}
Consider two reasonable, successful strategies $S_1 = (v_1^{1}, \ldots, v_{\ell_1}^{1})$ and $S_2 = (v_1^{2}, \ldots, v_{\ell_2}^{2})$ with $\ell_2 \geq \ell_1$. Since $S_1$ is successful, $V = \biguplus_{1 \leq i \leq \ell_1} m(v_i)$ partitions into the migration clusters given by $S_1$. Note that all $m(v_i) = c(v_i)$ are non-empty since $S_1$ is reasonable. Now, consider the (sub-)strategy $S_3 = (v_1^{2}, \ldots, v_{\ell_1}^{2})$ of the strategy $S_2$. By the partition property, there is for each $1 \leq j \leq \ell_1$ an $1 \leq i \leq \ell_1$ such that $v_j^{2} \in c(v_i^{1})$. Furthermore, for each $1 \leq j_1 < j_2 \leq \ell_1$ and corresponding $1 \leq i_1, i_2 \leq \ell_1$, we have $i_1 \neq i_2$ since otherwise $v_{j_2}^{2} \in m(v_{j_1}^{2}) \neq \emptyset$. The non-emptiness of $m(v_{j_1}^{2})$ follows from the assumption that $S_2$ is reasonable. But then $\sum_{1 \leq j \leq \ell_1} s(v_{j}^{2}) = \#V$ and the strategy $S_3$ is already successful.
\end{proof}
Due to Lemma~\ref{lemma:migration_length} the \emph{migration length} $\ell(G)$ as a property of the migration graph $G$ is well-defined. It basically expresses how many independent migration steps one has to perform in order to fully migrate a system. Equivalently, we see that the set of components $V$ partitions into the migration clusters of its components, i.\,e., $V = \bigcup_{v \in V} c(v)$ and for all $v_1, v_2 \in V$, we either have $c(v_1) = c(v_2)$ or $c(v_1) \cap c(v_2) = \emptyset$. In other words the relation $v_1 \sim_c v_2 :\Leftrightarrow c(v_1) = c(v_2)$ is an equivalence relation. 
%
Another important measure of the complexity of a migration project is the
\emph{migration depth} of the migration graph $G$. Clearly, migratable independent
components, i.\,e., migratable $v_1, v_2 \in V$ with $c(v_1) \neq c(v_2)$ can in practice either be migrated in parallel or sequentially.
As it turns out there is a \emph{canonical strategy} $S(G)$ (up to isomorphism) that specifies how in practice a migration project should be performed optimally. The idea is to specify a directed acyclic graph $G/c = (V/c, E/c)$, classically called the \emph{condensation of $G$}, containing as nodes the migration clusters $c(v)$ that is $V/c = \Set{ c(v) ; v \in V}$ with an edge $c(v_1) \to c(v_2)$ iff there is an edge between a member of $c(v_1)$ and a member of $c(v_2)$, i.\,e., $E/c = \Set{ c(v_1) \to c(v_2) ; v_1 \to v_2 \in E }$, see Fig.~\ref{fig:contraction}. We now have $\#(V/c) = \ell(G)$. We state the simple and well-known lemma whose proof is straightforward:
\begin{lemma}
\label{thm:acyclic_condensation}
For a migration graph $G = (V,E)$ the graph $G/c$ is acyclic.\qed
\end{lemma}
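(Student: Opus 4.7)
The plan is to argue by contradiction: suppose $G/c$ contains a (non-trivial) directed cycle $c_1 \to c_2 \to \cdots \to c_k \to c_1$ with $k \geq 2$ and all $c_i$ pairwise distinct clusters. My goal is to derive from this a cycle in $G$ that visits representatives of each $c_i$, which by the definition of migration clusters would force all $c_i$ to collapse into a single cluster, contradicting distinctness.

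The key input is twofold. First, every edge of $G/c$ is, by definition, witnessed by some edge in $G$: for each $i$, the edge $c_i \to c_{i+1}$ (indices mod $k$) comes from an edge $u_i \to w_{i+1} \in E$ with $u_i \in c_i$ and $w_{i+1} \in c_{i+1}$. Second, inside each cluster $c_i$ any two components are mutually reachable under $\to^*$, because this is exactly the defining property of $c(v)$: $w \in c(v)$ iff $w \in~dep(v)$ and $v \in~dep(w)$.

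I would then carry out the following construction. Fix arbitrary representatives $x_i \in c_i$. Using mutual reachability within each cluster I obtain $x_i \to^* u_i$ and $w_{i+1} \to^* x_{i+1}$; composing with the inter-cluster edge $u_i \to w_{i+1}$ gives $x_i \to^* x_{i+1}$ for every $i$ (mod $k$). Concatenating all these $\to^*$-paths yields $x_1 \to^* x_1$ via each $x_i$, so $x_j \in~dep(x_1)$ and $x_1 \in~dep(x_j)$ for every $j$. By Definition~\ref{def:migration-cluster} this means $x_1, \ldots, x_k$ all lie in the same migration cluster, and hence $c_1 = c_2 = \cdots = c_k$, contradicting our assumption that the $c_i$ were distinct nodes of $G/c$.

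There is no real obstacle here; the proof is essentially a routine unpacking of the definition of strongly connected components together with the observation that $\to^*$ is transitive. The only subtlety worth mentioning is the convention that the condensation's edge set $E/c$ does not count self-loops $c(v) \to c(v)$ arising from intra-cluster edges of $G$, which is the standard reading and is consistent with the use of $G/c$ as an acyclic scaffold for the canonical strategy $S(G)$ discussed just before the lemma.
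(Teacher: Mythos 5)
Your proof is correct: the paper itself gives no argument for this lemma (it is stated with a \qed and described as ``simple and well-known ... whose proof is straightforward''), and your contradiction argument\,--\,lifting a cycle in $G/c$ to mutual reachability of representatives in $G$, which collapses the clusters into one\,--\,is exactly the standard proof the authors are alluding to. No gaps; the remark about excluding self-loops from $E/c$ is the right reading of the definition.
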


\label{def:canonical_slice}
Notions like migratability can now also be considered for the graph $G/c$. We define the \emph{canonical migration time} of $c(v) \in V/c$ as follows: $t(c(v)) :=~min_{1\leq k \leq \ell(G)} c(v) \to^k V/c = \emptyset$. The canonical migration time is the position in the topological order of the directed acyclic graph $G/c$. Migratable migration clusters have $t(c(v)) = 1$, all other clusters have $t(c(v)) > 1$. Write $T(k) := \Set{ c(v) \in V/c ; t(c(v)) = k }$ for the set of all migration clusters with canonical migration time $k$. The \emph{canonical strategy} is now to first migrate all migration clusters in $T(1)$ then in $T(2)$ and so on. A canonical strategy is thus given by $S(G) = (T(1), \ldots, T(d(G)))$. We call $d(G) :=~argmax_{c(v) \in V/c} t(c(v))$ the \emph{migration depth} of $G$. In other words, the migration depth specifies how many migration steps successively depend on each other in the worst case. Note that for all $1 \leq i \leq d(G)$ we can arbitrarily order the elements of the set $T(i)$. Thus, there are exactly $\prod_{1 \leq i \leq d(G)} \#T(i)!$ many canonical strategies. Since all of them are isomorphic (w.r.t.\ the dependency relation on $G/c$), our notation for $S(G)$ is sound.

The overall structure of the migration graph $G$ is thus as follows (cf.~Fig.~\ref{fig:clusters}):

\begin{enumerate}
\item $G$ partitions into the migration clusters of its components.
\item The migration clusters have only acyclic dependencies between each other.
\item The migration clusters have thus a canonical position in the migration process depending on how deep in the graph they are located.
\end{enumerate}

Note that there are also non-canonical migration strategies, where a migration of a component $v \in V$ removed a dependency of another component $w \in V$ and even though there were other migratable components in the first place the migration of $w$ is done next.
%

\begin{figure*}[t]
\centering

\begin{tikzpicture}[auto,
            > = Stealth,
every edge quotes/.style = {font=\footnotesize}, 
every edge/.append style = {->, draw=bluelight, thick},
every loop/.append style = {<-, looseness = 12},
node distance = 15mm,
 state/.style = {circle, semithick, fill=bluelight, draw=bluelight, text=white, minimum size=2.5em},
 initial text = ,]

\node (L) [state] {$v_{10}$};
\node (K) [state, below=.9cm of L] {$v_{11}$};
\node (J) [state, below=.9cm of K] {$v_{12}$};
\node (I) [state, right=of K] {$v_9$};
\node (H) [state, below right=of I] {$v_8$};
\node (G) [state, above right=of I] {$v_7$};
\node (F) [state, right=of H] {$v_6$};
\node (E) [state, right=of G] {$v_5$};
\node (D) [state, below right=of E] {$v_4$};
\node (B) [state, right=of D] {$v_2$};
\node (C) [state, below=.9cm of B] {$v_3$};
\node (A) [state, above=.9cm of B] {$v_1$};

\path   (J) edge (I)
	(K) edge (I)
	(L) edge (I)
        (H) edge (I)
        (G) edge (H)
        (I) edge (G)
        (H) edge (F)
        (G) edge (E)
        (F) edge [bend left=10] (E)
        (E) edge [bend left=10] (F)
        (F) edge (D)
        (E) edge (D)
        (D) edge (C)
        (D) edge (B)
        (D) edge (A)
        ;

        \coordinate (mid41) at ($(J)!0.5!(I)$);
    	\coordinate (mid42) at ($(L)!0.5!(I)$);
        
        \coordinate (mid31) at ($(G)!0.5!(E)$);
    	\coordinate (mid32) at ($(H)!0.5!(F)$);
        
        \coordinate (mid21) at ($(E)!0.5!(D)$);
    	\coordinate (mid22) at ($(F)!0.5!(D)$);
        
        \coordinate (mid11) at ($(D)!0.5!(A)$);
    	\coordinate (mid12) at ($(D)!0.5!(C)$);
        
	\node[draw, dashed, draw=bluelight, text=white, rounded corners, fit=(A), inner sep=2mm, label=below:$c_1$] {};
	\node[draw, dashed, draw=bluelight, text=white, rounded corners, fit=(B), inner sep=2mm, label=below:$c_2$] {};
	\node[draw, dashed, draw=bluelight, text=white, rounded corners, fit=(C), inner sep=2mm, label=below:$c_3$] {};
	
	\node[draw, dashed, draw=bluelight, text=white, rounded corners, fit=(D), inner sep=2mm, label=below:$c_4$] {};
        \node[draw, dashed, draw=bluelight, text=white, rounded corners, fit=(E) (F), inner sep=2mm, label=below:$c_5$] {};
        
        \node[draw, dashed, draw=bluelight, text=white, rounded corners, fit=(G) (H) (I), inner sep=2mm, label=below:$c_6$] {};
        
	\node[draw, dashed, draw=bluelight, text=white, rounded corners, fit=(J), inner sep=2mm, label=below:$c_9$] {};
	\node[draw, dashed, draw=bluelight, text=white, rounded corners, fit=(K), inner sep=2mm, label=below:$c_8$] {};
	\node[draw, dashed, draw=bluelight, text=white, rounded corners, fit=(L), inner sep=2mm, label=below:$c_7$] {};

	\coordinate (top) at ([yshift=.4cm]current bounding box.north);
	\coordinate (bottom) at ([yshift=-.8cm]current bounding box.south);

	\draw[dashed, draw=black!40] (mid41 |- top) -- (mid41 |- bottom);
	\draw[dashed, draw=black!40] (mid31 |- top) -- (mid31 |- bottom);
	\draw[dashed, draw=black!40] (mid21 |- top) -- (mid21 |- bottom);
	\draw[dashed, draw=black!40] (mid11 |- top) -- (mid11 |- bottom);

	\node[fit=(A) (B) (C), inner sep=3mm,minimum height=6cm, label=below:$T(1)$] {};
	\node[fit=(D), inner sep=3mm,minimum height=6cm,label=below:$T(2)$] {};
        \node[fit=(E) (F), inner sep=3mm,minimum height=6cm,label=below:$T(3)$] {};
        \node[fit=(G) (H) (I), inner sep=3mm,minimum height=6cm,label=below:$T(4)$] {};
        \node[fit=(J) (K) (L), inner sep=3mm,minimum height=6cm,label=below:$T(5)$] {};

\end{tikzpicture}

\caption{The migration graph from Fig.~\ref{fig:example_graph} with its migration clusters (denoted by dashed boxes) and canonical migration times. In this example with $12$ components, we have $\ell(G) = 9$ and $d(G) = 5$. \label{fig:clusters}}
\end{figure*}

\section{Assessing the Difficulty of the Migration Problem}
\label{sec:migration_hard}

We now show that, as expected, any sufficiently large migration project splits into many dependent migration steps that cannot be parallelized and provide a formal proof of this statement.

From classical results in combinatorial analysis we know the following about random decompositions of the migration graph into its clusters.

\subsection{Random Migration Clusters}

Since every migration graph decomposes disjointly into its migration clusters, we now study the expected properties of these migration clusters. Let $n$ be the number of components. There are $B_n$ different possible partitions of $n$ components in disjoint subsets, where 
\[
B_n = \sum_{k=0}^n {n \choose k} B_k = \frac{1}{e} \sum_{k=0}^\infty \frac{k^n}{k!}
\]
is the $n$th Bell number. More precisely, there are $n \brace k$ partitions of our components in $k$ distinct non-empty subsets, where $n \brace k$ are called the \emph{Stirling numbers of the second kind}. Obviously, we have $B_n = \sum_{1 \leq k \leq n} {n \brace k}$.

\subsubsection{Sampling Migration Clusters}
\label{sec:set_partitions}

Consider now a migration project with $m$ different sub-projects and suppose each of our $n$ components is assigned to one of the $m$ sub-projects uniformly at random. Let $\Pi_{nm}$ be the random variable defining the resulting partition $\pi$ of our components. We have:
\[
~prob(\Pi_{nm} = \pi) = \frac{m^{\underline{k}}}{m^n}
\]
if $\#\pi = k$, where $m^{\underline{k}} = m \cdot (m-1) \cdots (m-k+1)$ is the falling factorial function. In other words: if there are $k$ different sub-projects with at least one component, then the probability of creating one such partition is given by the formula above. For the distribution of the number of non-empty different sub-projects we have
\[
~prob(\#\pi = k) = {n \brace k}\frac{m^{\underline{k}}}{m^n}.
\]
Note that for arbitrary migration graphs, we do not know the distribution of the different partitions. We have the following
\begin{theorem}[Proposition 2 in~\cite{pit97b}]
\label{thm:partition_sampling}
Let $M$ be the random variable with values in $\N_{\geq 1}$ and suppose given $M = m$ that $n$ balls labeled by $\N_{\leq n}$ are thrown independently and uniformly at random into $m$ boxes. Then the following two statements are equivalent:
\begin{enumerate}
\item $M$ is distributed via $~prob(M = m) = m^n p_m / B_n$ for $m \in \N_{\geq 1}$, where $p_m$ is a probability distribution on $\N_{\geq 0}$ whose first $n$ factorial moments are identically equal to $1$. One may for example take $p_m = 1/(e m!)$.
\item $\Pi_{nM}$ has uniform distribution over the set of all partitions of $n$ components.
\end{enumerate}
\end{theorem}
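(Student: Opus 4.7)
The plan is to compute $~prob(\Pi_{nM} = \pi)$ for an arbitrary partition $\pi$ by conditioning on $M$ and to observe that the factor $m^n$ appearing in the postulated law of $M$ cancels the $m^n$ in the denominator of the previously stated formula $~prob(\Pi_{nm} = \pi) = m^{\underline{k}}/m^n$ (with $k = \#\pi$). What remains inside the sum is $m^{\underline{k}} p_m$, whose total is the $k$-th factorial moment of $p$, which equals $1$ by hypothesis. Since there are $B_n$ partitions of $n$ components in all, the common value $1/B_n$ yields the uniform distribution. This settles the direction $1 \Rightarrow 2$ in essentially one display.

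For the converse $2 \Rightarrow 1$, write $q_m := ~prob(M = m)$ for $m \geq 1$. Uniformity of $\Pi_{nM}$ combined with the same conditioning identity forces
\[
\sum_{m \geq 1} q_m \frac{m^{\underline{k}}}{m^n} \;=\; \frac{1}{B_n}, \qquad k = 1, \ldots, n,
\]
because every partition $\pi$ with $k$ blocks has the same probability $1/B_n$. The natural candidate is to set $p_m := B_n q_m / m^n$ for $m \geq 1$ and $p_0 := 1 - \sum_{m \geq 1} p_m$; then $~prob(M = m) = m^n p_m / B_n$ holds by construction, and multiplying the displayed identity by $B_n$ gives $\sum_{m \geq 0} m^{\underline{k}} p_m = 1$ for $k = 1, \ldots, n$, since $m^{\underline{k}} = 0$ at $m = 0$ for $k \geq 1$ and so $p_0$ drops out of the moment sums.

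The one genuine obstacle is to check that $p$ is a bona fide probability distribution, i.e.\ that $p_0 \geq 0$, equivalently $B_n \sum_{m \geq 1} q_m/m^n \leq 1$. I expect this to follow from the elementary fact that $m^{\underline{1}} = m \geq 1$ for every $m \geq 1$: indeed,
\[
\sum_{m \geq 1} \frac{q_m}{m^n} \;\leq\; \sum_{m \geq 1} \frac{q_m\, m^{\underline{1}}}{m^n} \;=\; \frac{1}{B_n}
\]
by the $k = 1$ instance of the moment equation. As a sanity check I would verify the claimed example $p_m = 1/(e\,m!)$ by the direct computation $\sum_{m \geq k} m^{\underline{k}}/(e\,m!) = e^{-1}\sum_{j \geq 0} 1/j! = 1$, which also reproduces the classical series for $B_n$ cited just before the theorem.
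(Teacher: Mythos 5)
Your proposal is correct, and it is worth noting that the paper itself supplies no proof of this statement: it is imported verbatim as Proposition~2 of the cited Pitman reference, so there is nothing in the text to compare against line by line. Your argument is the natural one and, as far as I can tell, complete. The direction $1 \Rightarrow 2$ is exactly the conditioning computation $~prob(\Pi_{nM}=\pi)=\sum_{m}~prob(M=m)\,m^{\underline{k}}/m^{n}=\frac{1}{B_n}\sum_{m}m^{\underline{k}}p_m=1/B_n$, and the converse correctly extracts the $n$ moment equations from the fact that partitions with exactly $k$ blocks exist for every $1\leq k\leq n$. The only genuinely delicate point in the converse is verifying that your candidate $p$ is an honest probability distribution, and your bound $\sum_{m\geq 1}q_m/m^{n}\leq\sum_{m\geq 1}q_m\,m/m^{n}=1/B_n$ via the $k=1$ moment equation settles $p_0\geq 0$ cleanly. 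Two small remarks: first, in the forward direction you tacitly use that $\sum_m m^{n}p_m=B_n$ so that the postulated law of $M$ is normalized; this follows from $m^{n}=\sum_{k}{n\brace k}m^{\underline{k}}$ together with the moment hypothesis and $B_n=\sum_{k}{n\brace k}$, and is worth one sentence. Second, your sanity check for $p_m=1/(e\,m!)$ is exactly Dobi\'nski's formula already quoted in the text before the theorem, so it ties back to the paper's setup as you intend.
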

The Theorem gives a constructive way of uniformly generating a migration cluster: First sample $m \in \N_{\geq 1}$ following the above distribution, then distribute the $n$ components uniformly in $m$ bins. The resulting partition $\pi$ will be a uniformly selected partition of our $n$ components. 
We will now study the expected properties of the outcome.

\subsubsection{Distribution of Random Migration Clusters}
\label{sec:random_clusters}

Studying the distribution of random partitions of sets with $n$ elements is a classical problem. Set into the context of migration problems, we state
\begin{theorem}[Summary of~\cite{odlric85} and Chapter 4, Theorem 1.1 in~\cite{sac97}]
\label{thm:cluster-distribution}
The number of migration clusters $M$ is expectedly $\mathsf{E}(M) \sim n/~log n$ and the expected size of each migration cluster is $~log n$. Also, almost all partitions have approximately $e ~log n$ distinct cluster sizes. More specifically the distribution of the number of migration clusters $M$ converges asymptotically to a normal distribution with mean $\mu = n/~log n$ and standard deviation $\sigma = \sqrt{n}/~log n$ for $n \to \infty$.
\end{theorem}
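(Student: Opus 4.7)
The plan is to outline how each claim follows from the standard analytic-combinatorics machinery for uniform random set partitions, as developed in the cited works. The basic objects are the exponential generating function $B(x) = e^{e^x - 1}$ for Bell numbers together with its bivariate refinement
\[
F(x,z) = \exp\bigl(z(e^x - 1)\bigr) = \sum_{n \geq 0} \frac{x^n}{n!} \sum_{k \geq 0} {n \brace k} z^k,
\]
which jointly encode the block-count distribution via $\Pr(M = k) = {n \brace k}/B_n$, using Theorem~\ref{thm:partition_sampling} to pass from the two-stage sampling model to the uniform measure on partitions of $[n]$.

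For the mean and the expected cluster size, I would apply the Moser--Wyman saddle-point method to $B_n = n![x^n] B(x)$. The saddle $r = r(n)$ is the unique positive solution of $r e^r = n$, so $r = \log n - \log\log n + o(1)$, and the method yields
\[
B_n \sim \frac{n!\, e^{e^r - 1}}{r^n \sqrt{2\pi n (r+1)}}.
\]
Differentiating $\log F(x,z)$ in the $z$-direction and evaluating at the saddle gives $\mathsf{E}(M) \sim e^r = n/r \sim n/\log n$, from which the expected cluster size $n/\mathsf{E}(M) \sim \log n$ is immediate. The $e \log n$ statistic for distinct cluster sizes comes from the local Poisson approximation of~\cite{odlric85}: the number $C_j$ of blocks of size $j$ is asymptotically Poisson with parameter $\lambda_j$ of order $r^j/j!$, and summing the indicator expectations $1 - e^{-\lambda_j}$ over the effective range of $j$ produces $er \sim e \log n$, because the contribution saturates to $1$ for $j \lesssim er$ and decays geometrically beyond that.

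The main obstacle is the central limit theorem for $M$. The cleanest route is Harper's argument via real-rootedness: the row polynomial $P_n(z) = \sum_k {n \brace k} z^k$ has only non-positive real roots, so $M$ is distributed as a sum of independent Bernoulli indicators, one per root. Once one verifies that the variance diverges, the Lindeberg--Feller condition is automatic and yields convergence of $(M - \mu)/\sigma$ to the standard normal. The residual work is quantitative: a bivariate saddle-point analysis of $F(x,z)$ with $z$ in a small neighborhood of $1$ pins down $\mu \sim n/\log n$ and $\sigma^2 \sim n/\log^2 n$, matching the stated parameters. Each of these ingredients is classical, so I would cite the two sources and not reproduce the saddle-point expansions in full.
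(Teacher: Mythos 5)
The paper gives no proof of this statement at all: it is imported verbatim as a summary of Odlyzko--Richmond and of Sachkov's Chapter~4, Theorem~1.1, with the migration-cluster language substituted for ``blocks of a uniform random set partition.'' Your sketch is therefore not comparable to an in-paper argument, but it is a correct reconstruction of how the cited classical results are actually established: the Moser--Wyman saddle point $re^{r}=n$ for $B_{n}$, the identity $\mathsf{E}(M)=B_{n+1}/B_{n}-1\sim e^{r}=n/r\sim n/\log n$, Harper's real-rootedness of $\sum_{k}{n\brace k}z^{k}$ giving $M$ as a sum of independent Bernoulli variables (so that divergence of the variance, here $\sigma^{2}\sim n/\log^{2}n$, already forces the CLT), and the Poissonization of block-size counts for the $e\log n$ distinct-sizes statistic. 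Two minor points are worth flagging. First, the clause ``the expected size of each migration cluster is $\log n$'' is, in both the paper and your reading, just the ratio $n/\mathsf{E}(M)$; the size of the block containing a fixed element is size-biased and behaves differently, so your interpretation matches the paper's intended (loose) meaning rather than a stronger per-block claim. Second, your Poisson heuristic for the distinct-sizes count compresses the actual Odlyzko--Richmond argument considerably (the saturation range and the error control there are the real work), but as a citation-level summary this is acceptable, and your stated asymptotics for $\mu$ and $\sigma$ agree with the theorem.
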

Furthermore, the expected maximum size of a migration cluster is, for growing $n$, asymptotic to $e r - ~log \sqrt{r}$, where $r \cdot ~exp(r) = n$ and $r = ~log n - (1+o(1)) ~log ~log n \sim ~log n$~\cite{pit97}. This means that we typically expect lots of small clusters and we do not expect exceptionally large ones. Results on the speed of convergence are beyond the scope of this article. We now compute an upper bound on the expected number of migration clusters when looking independently at $s$ randomly selected migration projects.
\begin{theorem}
\label{thm:cluster-distribution-max}
The expected maximal number of migration clusters when looking independently at $s$ randomly selected migration projects is bounded above by
\[
\frac{n}{~log n} + \frac{\sqrt{2 n ~log s}}{~log n}.
\]
\end{theorem}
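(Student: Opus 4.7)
The plan is to reduce the problem to a standard Gaussian maximum estimate, using the asymptotic normal approximation supplied by Theorem~\ref{thm:cluster-distribution}. Let $M_1,\ldots,M_s$ be the (independent) numbers of migration clusters observed in the $s$ randomly selected projects. By Theorem~\ref{thm:cluster-distribution}, each $M_i$ is asymptotically distributed as $\mathcal{N}(\mu,\sigma^2)$ with $\mu = n/\log n$ and $\sigma = \sqrt{n}/\log n$. Writing $Y_i = M_i - \mu$ for the centered variables, the task reduces to bounding $\mathsf{E}(\max_{1\le i\le s} Y_i)$ and then adding $\mu$ at the end, since $\mathsf{E}(\max_i M_i) = \mu + \mathsf{E}(\max_i Y_i)$.

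The key step is the classical moment-generating-function (Chernoff-style) estimate for the maximum of sub-Gaussian variables. For any $\lambda > 0$, Jensen's inequality gives
\[
\exp\bigl(\lambda\,\mathsf{E}(\max_i Y_i)\bigr) \le \mathsf{E}\bigl(\exp(\lambda \max_i Y_i)\bigr) \le \sum_{i=1}^s \mathsf{E}(e^{\lambda Y_i}) = s \cdot e^{\lambda^2 \sigma^2/2},
\]
using $\mathsf{E}(e^{\lambda Y_i}) = e^{\lambda^2 \sigma^2/2}$ for a centered Gaussian. Taking logarithms and dividing by $\lambda$ yields
\[
\mathsf{E}(\max_i Y_i) \le \frac{\log s}{\lambda} + \frac{\lambda \sigma^2}{2},
\]
and minimizing the right-hand side in $\lambda$ at $\lambda = \sqrt{2\log s}/\sigma$ produces the bound $\mathsf{E}(\max_i Y_i) \le \sigma\sqrt{2\log s}$. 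Substituting $\mu$ and $\sigma$ then gives precisely the claimed expression
\[
\frac{n}{\log n} + \frac{\sqrt{2 n\,\log s}}{\log n}.
\]

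The main subtlety lies in the word ``asymptotically'' in Theorem~\ref{thm:cluster-distribution}: the normal approximation is exact only in the limit $n \to \infty$, so strictly speaking the bound above holds up to an $o(1)$ correction, or alternatively must be justified by observing that the distribution of $M_i$ is itself sub-Gaussian with parameter $\sigma$ for all sufficiently large $n$, which is what the MGF argument really requires. Once the sub-Gaussian tails are in hand, the rest is the routine optimization sketched above, so this is where I would be most careful in a fully rigorous write-up; everything else amounts to bookkeeping.
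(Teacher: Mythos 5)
Your proposal is correct and follows essentially the same route as the paper's proof: center the cluster counts, apply Jensen's inequality to the moment generating function of the maximum, union-bound by the sum of the individual Gaussian MGFs to get $s\,e^{\lambda^2\sigma^2/2}$, and optimize over $\lambda$ at $\sqrt{2\log s}/\sigma$ to obtain $\sigma\sqrt{2\log s}$ before adding back $\mu$. Your closing caveat about the normal approximation being only asymptotic (so the MGF identity really requires a sub-Gaussian tail bound for finite $n$) is a fair point that the paper glosses over as well.
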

\begin{proof}
Let $M_1, \ldots, M_s$ be the resulting numbers of migration clusters and define $X = ~max_{1 \leq i \leq s} M_i - \mu$. Then conveniently $\mathsf{E}(X) = 0$.
Since for a fixed real constant $t$ the function $x \mapsto ~exp(t x)$ is convex, by Jensen's inequality we have:
\begin{align*}
~exp(t \mathsf{E}(X)) & \leq \mathsf{E}(~exp(t X)) = \mathsf{E}(\max_{ 1\leq i\leq s}(~exp(t M_i)))\\
& \leq \sum_{i = 1}^{s} \mathsf{E}(~exp(t M_i))
 = s ~exp(t^2 \sigma^2 / 2)
\end{align*}
Here, the last equality follows from the fact that the moment generating function $\mathsf{E}(~exp(t M_i))$ for a normal distribution with mean zero is equal to $~exp(t^2 \sigma^2 / 2)$ and in particular independent of the index-variable $i$.
Rewriting this, $\mathsf{E}(X) \leq \frac{~log s}{t} + \frac{t \sigma^2}{2}$. This bound is minimal at $\frac{\sqrt{2 ~log s}}{\sigma}$ since its derivative in direction $t$ is $\frac{\sigma^2}{2} - \frac{~log s}{t^2}$. Thus, $\mathsf{E}(X) \leq \sigma \sqrt{2 ~log s}$. 
\end{proof}
In practice, the Lemma implies that we would have to sample exponentially many migration graphs (in the number of components $n$) in order to expectedly observe graphs with asymptotically more than $n / ~log n$ migration clusters. Thus, the statement from Theorem~\ref{thm:cluster-distribution} applies in practice all the time, at least for sufficiently large $n$.
%
It remains to study the dependencies between the migration clusters, i.\,e., the structure of the condensation modulo the migration clusters.

\begin{table*}[t]
\caption{Numerical evaluation of the expected behavior of migration clusters for migration projects with $n$ components rounded to the nearest integer. We see that one does not expect large mutual dependencies but the expected number of independent clusters is vast and the chain-dependencies are also large when we bound the number of migration clusters with fixed canonical migration time by $w(n) = ~log n$ or $w(n) = \sqrt{n} / ~log n$, respectively.}
\label{tab:numerical_behavior}
\[
\begin{array}{c@{\qquad}c@{\qquad}c@{\qquad}c@{\qquad}|@{\qquad}c@{\qquad}c}
\toprule
\#V & \mathsf{E}(\#c(v)) & \mathsf{E}(\#(G/c)) & \mathsf{SD}(\#(G/c)) & \multicolumn{2}{c}{d(G)}\\
n   & ~log n & \frac{n}{~log n} & \frac{\sqrt{n}}{~log n} & \frac{n}{~log^2 n} & \sqrt{n}\\
\midrule
\spacednumber{10} & \spacednumber{2} & \spacednumber{4} & \spacednumber{1} & \spacednumber{2} & \spacednumber{3}\\
\spacednumber{100} & \spacednumber{5} & \spacednumber{22} & \spacednumber{2} & \spacednumber{5} & \spacednumber{10}\\
\spacednumber{1000} & \spacednumber{7} & \spacednumber{145} & \spacednumber{5} & \spacednumber{21} & \spacednumber{32}\\
\spacednumber{10000} & \spacednumber{9} & \spacednumber{1086} & \spacednumber{11} & \spacednumber{118} & \spacednumber{100}\\
\spacednumber{100000} & \spacednumber{12} & \spacednumber{8686} & \spacednumber{27} & \spacednumber{754} & \spacednumber{316}\\
\spacednumber{1000000} & \spacednumber{14} & \spacednumber{72382} & \spacednumber{72} & \spacednumber{5239} & \spacednumber{1000}\\
\spacednumber{10000000} & \spacednumber{16} & \spacednumber{620421} & \spacednumber{196} & \spacednumber{38492} & \spacednumber{3162}\\
\spacednumber{100000000} & \spacednumber{18} & \spacednumber{5428681} & \spacednumber{543} & \spacednumber{294706} & \spacednumber{10000}\\
\spacednumber{1000000000} & \spacednumber{21} & \spacednumber{48254942} & \spacednumber{1526} & \spacednumber{2328539} & \spacednumber{31623}\\
\bottomrule
\end{array}
\]
\end{table*}

\subsection{Properties of the Condensation Graph}
\label{sec:migration_hard:condensation_graph}
According to Theorem~\ref{thm:cluster-distribution}, we consider a condensation graph, i.\,e., the directed acyclic graph resulting from the migration graph on $n$ components modulo its strongly connected components, with expectedly $n' = n / ~log n$ vertices. Since we are considering a single migration project, we can assume that this graph is connected.
How can we model this graph?

A first idea would be to use the standard Erdős–Rényi model~\cite{erdren1960evolution}, where the graph is constructed by connecting the components independently at random with a fixed probability $p$. One can ensure that the result is acyclic by defining any linear order on the vertices and include a directed edge only if it obeys the selected ordering. Each such edge is included in the graph with probability $p$, independently from every other edge.
For such graphs, we have
\begin{theorem}[Theorem 4.1 in~\cite{frie23-random-graphs}, a variant of Theorem 1 in~\cite{erdren59random}]
\label{thm:connected}
Let $G_{n, p_n}$ be a random $n$-vertex graph, where each arc is present with probability $p_n = \frac{~log n + c_n}{n}$. Then, for growing $n$, the probability that $G$ is connected is given by
\[
\lim_{n \to \infty} ~prob(G_{n, p_n} \text{ connected}) = \begin{cases}
0 & \text{, if $c_n \to -\infty$}\\
e^{-e^{-c}} & \text{, if $c_n \to c$ const.}\\
1 & \text{, if $c_n \to \infty$}\\
\end{cases}
\]
\end{theorem}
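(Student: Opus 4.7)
The central idea is that at the threshold $p_n = (\log n + c_n)/n$, the dominant obstruction to connectivity is the presence of isolated vertices. The plan is two-fold: first, determine the limiting distribution of the number $X_n$ of isolated vertices, which will be $\mathrm{Poisson}(e^{-c})$ in the constant regime; second, show that conditional on $X_n = 0$ the graph is connected with probability tending to $1$. Writing $P(G\text{ disconnected}) = P(X_n \geq 1) + P(X_n = 0,\ G\text{ disconnected})$, the three regimes of the theorem then follow from the asymptotics of $\mathsf{E}(X_n)$ combined with the vanishing of the second term.

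First I would compute the expected number of isolated vertices. A fixed vertex is isolated iff none of its $n-1$ potential neighbours is connected to it, so by linearity of expectation
\[
\mathsf{E}(X_n) = n(1-p_n)^{n-1} = \exp\bigl(\log n + (n-1)\log(1-p_n)\bigr) = e^{-c_n}\bigl(1+o(1)\bigr),
\]
using $(n-1)\log(1-p_n) = -\log n - c_n + O(n p_n^2)$ and $n p_n^2 = o(1)$. This already handles the two monotone regimes: if $c_n \to \infty$ then $\mathsf{E}(X_n) \to 0$ and Markov's inequality gives $P(X_n \geq 1) \to 0$; if $c_n \to -\infty$ then $\mathsf{E}(X_n) \to \infty$ and a second-moment computation (the $k=2$ case of the next step, which shows $\mathsf{E}(X_n^2) \sim \mathsf{E}(X_n)^2$) yields $P(X_n \geq 1) \to 1$ by Chebyshev.

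For the constant regime, I would upgrade the first moment to a full distributional statement by the method of factorial moments. A fixed set of $k$ vertices is jointly isolated iff none of the $k(n-1) - \binom{k}{2}$ potential edges incident to them is present, hence
\[
\mathsf{E}\bigl(X_n^{\underline{k}}\bigr) = n^{\underline{k}}(1-p_n)^{k(n-1)-\binom{k}{2}} \longrightarrow e^{-ck}
\]
for every fixed $k$ whenever $c_n \to c$. Since $e^{-ck}$ is precisely the $k$-th factorial moment of $\mathrm{Poisson}(e^{-c})$, the method of moments delivers $X_n \Rightarrow \mathrm{Poisson}(e^{-c})$, and in particular $P(X_n = 0) \to e^{-e^{-c}}$.

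The remaining step, and the one I expect to be the main obstacle, is to show that $P(G\text{ disconnected},\ X_n = 0) = o(1)$. On this event there must be a connected component of size $k$ for some $2 \leq k \leq n/2$; using Cayley's formula ($k^{k-2}$ spanning trees on a fixed $k$-set) together with the condition that no edge crosses the cut to the remaining $n-k$ vertices,
\[
P(\exists\text{ component of size } k) \leq \binom{n}{k} k^{k-2} p_n^{k-1}(1-p_n)^{k(n-k)}.
\]
Substituting $p_n = (\log n + c_n)/n$, one splits the range of $k$ into constant values (where $(1-p_n)^{k(n-k)} \approx n^{-k}$ dominates the polynomial factors and each term behaves like $n \cdot n^{k-1} \cdot (\log n/n)^{k-1} \cdot n^{-k} = (\log n)^{k-1}/n^{k-1}$, summable) and moderate-to-large values up to $n/2$, where Stirling combined with the exponential factor $(1-p_n)^{k(n-k)}$ must be invoked. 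The technical delicacy lies here: uniform control across the full range of $k$ requires the sharp threshold form of $p_n$ rather than merely $p_n \gg \log n/n$, and the bound is tightest for $k$ near the middle of the range. This component-size estimate is the heart of the classical Erdős–Rényi argument and the one place where genuine combinatorial work, beyond moment bookkeeping, is needed.
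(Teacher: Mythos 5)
The paper does not prove this statement at all\,---\,it is imported verbatim as a cited classical result (Frieze--Karo\'nski Theorem 4.1, Erd\H{o}s--R\'enyi)\,---\,and your argument is precisely the standard proof those sources give: the Poisson limit for isolated vertices via factorial moments settles all three regimes of $c_n$, and the Cayley-formula union bound $\binom{n}{k}k^{k-2}p_n^{k-1}(1-p_n)^{k(n-k)}$ over $2 \le k \le n/2$ kills the remaining disconnection events. The outline is correct, and you have rightly identified the uniform control of that sum over the full range of $k$ as the only place requiring real work.
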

Note that we can directly apply this Theorem to our directed acyclic graph situation, since such a graph is connected if and only if the undirected variant of it is connected.

As our goal is to prove a lower bound on the number of edges in the condensation graph (with $n'$ vertices!) in our model, we assume now the sparsest case $c_{n'} \to c$ constant for which the graph is connected with high probability. Note that the doubly-exponential term in Theorem~\ref{thm:connected} approaches $1$ already for comparatively small $c$ very quickly. For example, already for $c = 10$, the value of $e^{-e^{-c}} \geq 0.99995$. But also for any $c > 0$, we obtain for the number of edges in the condensation graph expectedly asymptotically $\frac{~log n' + c_{n'}}{n'} \cdot \frac{n' \cdot (n'-1)}{2} = \frac{1}{2}((n'-1) (~log n' + c_{n'})) \sim \frac{1}{2} n' ~log n'$.
That means that even in the sparsest case, we expect a considerable number of dependencies between the migration clusters.

We now analyze what happens if we bound the number of clusters with a common ancestor of distance $k$ (the maximum being the \emph{width $w$} of the graph), i.\,e., $\#(c(v) \to^k V/c)$ by a small function $w(n') = o(n')$ in the size $n'$ of the condensation graph $G/c$.

\begin{theorem}
\label{thm:large_depth}
Let $\shortmap[w]{\mathbb{R}}{\mathbb{R}}$ be a smooth function in a real variable with $w \in o(n')$, e.\,g., $w(n') = ~log n'$. Assume for any cluster $c(v)$ in a condensation graph $G/c$ with $n'$ components, we have for each $k \geq 0$ the bound $\# (c(v) \to^k V/c) \leq w(n')$. Then $d(G) \geq n'/w(n')$. In other words, the width of the condensation graph has a trade-off with the depth of the condensation graph.
\end{theorem}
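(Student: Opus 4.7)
The plan is to pass through the canonical layered decomposition of $G/c$ and bound the size of each layer by $w(n')$, from which the claimed depth bound follows by pigeonhole.

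First, I would invoke the partition $V/c = T(1) \sqcup T(2) \sqcup \cdots \sqcup T(d(G))$ from Section~\ref{sec:migration_strategies}, where by construction $d(G)$ is the number of nonempty layers. A short argument shows that each $T(k)$ is an antichain in $G/c$: if $c(u) \to c(v)$ were an arc with both endpoints in $T(k)$, then the longest out-path from $c(u)$ would be at least one more than the longest out-path from $c(v)$, giving $t(c(u)) > t(c(v))$, contradicting $t(c(u)) = t(c(v)) = k$. Hence $|T(k)|$ is controlled by the antichain width of $G/c$.

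Second, I would translate the width hypothesis into a bound $|T(k)| \leq w(n')$ for every layer. Pick any cluster $c(v^*)$ attaining $t(c(v^*)) = d(G)$; after (if necessary) attaching a virtual super-source that points to every maximal-depth cluster, every cluster of $V/c$ is reachable from $c(v^*)$ along a directed path, and any cluster $c(w) \in T(k)$ sits at directed distance $d(G)-k$ from $c(v^*)$ along a path that realizes the extremal depth. Invoking the hypothesis $\#(c(v^*) \to^{d(G)-k} V/c) \leq w(n')$ then gives the layer bound. In the Erdős–Rényi regime of Section~\ref{sec:migration_hard:condensation_graph} such a dominating source exists with high probability, so the super-source reduction affects $d(G)$ by at most an additive constant.

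The conclusion is immediate by a one-line summation:
\begin{equation*}
n' \;=\; \sum_{k=1}^{d(G)} |T(k)| \;\leq\; d(G)\cdot w(n'),
\end{equation*}
whence $d(G) \geq n'/w(n')$, which is the stated depth–width trade-off.

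The main obstacle I expect is the second step: the hypothesis bounds per-vertex forward cones of the DAG, whereas the summation wants a bound on the sizes of global topological levels. Bridging this gap requires either the single-/super-source reduction sketched above, or a careful bookkeeping argument identifying each layer $T(k)$ with a distance-$(d(G)-k)$ cone in $G/c$. Once each $|T(k)| \leq w(n')$ is established, the pigeonhole step is routine.
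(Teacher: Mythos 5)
Your step 2 contains a genuine gap, and it is exactly at the point you flag as the main obstacle. The hypothesis bounds the forward distance cones $\#(c(v)\to^k V/c)$ of individual clusters, while your summation needs $\#T(k)\leq w(n')$ for the canonical layers. These two notions do not coincide: a cluster $c(u)\in T(k)$ is one whose longest \emph{outgoing} path has length $k-1$, and such clusters need not all sit at directed distance $d(G)-k$ from a deepest source. Concretely, take a directed path $v_0\to v_1\to\cdots\to v_{d-1}$ and attach $m$ sinks to each $v_i$; every cone $v_0\to^k V/c$ has size $m+1$, so the hypothesis holds with $w(n')=m+1$, yet $T(1)$ contains all $dm+1$ sinks and grossly violates the layer bound you need. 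The two patches you sketch do not close this: a virtual super-source is not a cluster of $G/c$, so the hypothesis does not apply to its cones, and an appeal to the Erd\H{o}s--R\'enyi regime holding ``with high probability'' cannot establish the deterministic statement of the theorem. Your antichain observation about $T(k)$ is correct but unhelpful here, since the hypothesis controls cone sizes, not antichain width.

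The paper's proof avoids the layers $T(k)$ entirely. It picks a source $c(v)$ realizing the depth and uses the sets $W_k := c(v)\to^k V/c$ for $0\leq k< d(G)$ directly: every cluster reachable from $c(v)$ lies in at least one $W_k$ with $k\leq d(G)-1$, so these $d(G)$ cones \emph{cover} $V/c$, and each has size at most $w(n')$ by hypothesis. This yields
\begin{equation*}
n' \leq \sum_{0\leq k< d(G)} \#W_k \leq d(G)\cdot w(n')
\end{equation*}
with no need to bound any $T(k)$. If you replace your layered decomposition by this covering of $V/c$ with source-rooted distance cones, your argument goes through; as it stands, the bridge from per-vertex cones to global topological levels is missing and cannot be supplied.
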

\begin{proof}
Without loss of generality we can assume that $G/c$ has a single source $c(v)$. Otherwise take the set of sources and consider the one with the longest path in $c(v) \to^* V/c$. This path has length $d(G)$. For $0 \leq k < d(G)$ define $W_k := c(v) \to^k V/c$. Since $G/c$ has only a single source, $\biguplus_{0 \leq k < d(G)} W_k = V/c$, e.\,g., $n' = \#(V/c) = \#\biguplus_{0 \leq k < d(G)} W_k = \sum_{0 \leq k < d(G)} \#W_k$. By assumption $\#W_k \leq w(n')$ for all $0 \leq k < d(G)$. Thus $n' \leq d(G) \cdot w(n')$ and the claim follows.
\end{proof}

If, e.\,g., in the Theorem $w(n') = ~log n'$, we obtain expectedly $d(G) \geq \frac{n'}{~log n'} = \frac{n}{(~log n - ~log~log n) ~log n}$, that is, large chains of dependencies of the migration clusters. Of course one can trade depth $d(G)$ vs.\ the bound $w(n')$, but then the complexity of at least one migration step will increase.

\subsection{Discussion}

In the previous sections we analyzed how different graph structures affect the
properties of our migration problem. To do so, we looked at random migration graphs, arguing that a random migration project should exhibit random clustering and the remaining directed acyclic condensation should also have a random structure. Thereby, we found several mathematical arguments that underline the fact that migration is hard in general.
Specifically, for a random (large) migration graph with $n$ components, we showed a number of theorems which are combined into the following

\begin{corollary}
\label{thm:hard_migration}
In expectation (of the random variables involved), we have:
\begin{enumerate}
\item Partitions can be efficiently sampled (Theorem~\ref{thm:partition_sampling}), i.\,e., we can easily observe different random migration graphs.

\item The migration graph decomposes in $n /~log n$ migration clusters of size $~log n$ with standard deviation $\sqrt{n} /~log n$ (Theorem~\ref{thm:cluster-distribution}).

\item Even after observing many migration projects, we do not
expect to observe any with considerably more than $n /~log n$ migration clusters (Theorem~\ref{thm:cluster-distribution-max}).

\item The condensation graph is connected and thus not sparse (Theorem~\ref{thm:connected}).

\item If for each node the number of dependencies is suitably bounded, the condensation graph has large depth, i.\,e., many migration clusters that successively depend on each other and cannot be parallelized. Stated the other way around, we have at least one migration step with many dependencies if we assume small depth (Theorem~\ref{thm:large_depth}).

\end{enumerate}
\end{corollary}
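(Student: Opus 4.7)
The plan is to observe that Corollary~\ref{thm:hard_migration} is a compilation statement: each of its five items has already been established as a separate theorem or direct consequence in the preceding two subsections, so the proof reduces to invoking them one by one and gluing together the notation. I would therefore write the proof as a short list that matches each item to its source, rather than reproving anything from scratch.

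More concretely, I would proceed as follows. Item~(1) is immediate from the constructive equivalence in Theorem~\ref{thm:partition_sampling}: sampling $m$ from the given distribution on $\mathbb{N}_{\geq 1}$ and then throwing the $n$ labeled components uniformly into $m$ bins produces a uniform random partition, and both stages are efficient. Item~(2) is exactly the statement of Theorem~\ref{thm:cluster-distribution}, giving $\mathsf{E}(M)\sim n/\log n$, mean cluster size $\log n$, and asymptotic normality with standard deviation $\sqrt{n}/\log n$. Item~(3) is exactly the bound proved in Theorem~\ref{thm:cluster-distribution-max}, from which one reads off that an exponential number of samples in $n$ would be required before the maximum of $M_1,\ldots,M_s$ can exceed $n/\log n$ by more than a low-order term. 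Item~(4) follows by applying Theorem~\ref{thm:connected} to the condensation graph on $n' = n/\log n$ vertices from item~(2): already at the sparsest threshold $p_{n'} = (\log n' + c)/n'$ the expected number of edges is asymptotic to $\tfrac{1}{2} n' \log n'$, which is super-linear in $n'$. Item~(5) is then the direct conclusion of Theorem~\ref{thm:large_depth} applied to the same condensation graph under a width bound $w(n') \in o(n')$.

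The main (and only) obstacle I foresee is a notational one: items~(2)--(3) are phrased in terms of a random partition into migration clusters (parameter $n$), whereas items~(4)--(5) live on the condensation graph (parameter $n'$). I would therefore be explicit that the chain of application is sequential: first use items~(1)--(3) to justify $n' \sim n/\log n$ in expectation, and only then pass to the condensation graph to apply Theorems~\ref{thm:connected} and~\ref{thm:large_depth}. Because the involved quantities are all random variables, I would also briefly remark that ``in expectation'' is to be understood item-wise, for the respective random variable appearing in each part. Beyond this bookkeeping, no new argument is required, so the proof is essentially a pointer list and could reasonably be concluded with a \qed after a one-sentence summary.
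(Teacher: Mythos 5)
Your reading is correct: the paper offers no separate proof of this corollary, introducing it only as "a number of theorems which are combined into the following," so the intended argument is exactly the pointer list you describe, with each item discharged by the theorem cited in its text. Your additional care about the passage from $n$ to $n' \sim n/\log n$ when moving from items (2)--(3) to items (4)--(5) matches the paper's own transition at the start of its condensation-graph subsection, so the proposal is correct and essentially identical in approach.
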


\noindent We thus showed that informally we have
\begin{fact}
\label{fact:hard_migration}
Any sufficiently large migration project splits into many dependent migration steps that cannot be parallelized. The migration of these clusters either takes many successive steps or includes at least one particularly difficult one.
\end{fact}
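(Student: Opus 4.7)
The plan is to deduce the Fact directly from Corollary~\ref{thm:hard_migration} by making its informal language precise. First I would fix what ``sufficiently large'' means: the asymptotic regimes of Theorems~\ref{thm:cluster-distribution}, \ref{thm:cluster-distribution-max}, \ref{thm:connected}, and~\ref{thm:large_depth} all take effect once $n$ is large enough that their error terms are dominated, e.\,g.\ once the standard deviation $\sqrt{n}/\log n$ from Theorem~\ref{thm:cluster-distribution} is small relative to the mean $n/\log n$ and the additive inflation $\sqrt{2n\log s}/\log n$ of Theorem~\ref{thm:cluster-distribution-max} is similarly negligible.

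For the first half of the Fact (``splits into many dependent migration steps that cannot be parallelized'') I would combine items~(2) and~(3) of the corollary to conclude that the number of migration clusters is, with high probability, of order $n/\log n$ and hence unboundedly large. By Lemma~\ref{thm:acyclic_condensation} these clusters induce an acyclic condensation $G/c$, and item~(4) based on Theorem~\ref{thm:connected} forces $G/c$ to be connected with asymptotically on the order of $\tfrac{1}{2} n' \log n'$ directed edges, where $n' = n/\log n$. Under the canonical strategy $S(G) = (T(1), \ldots, T(d(G)))$ of Section~\ref{sec:migration_strategies}, connectedness forbids collapsing the whole project into a single parallel batch $T(1)$: at least one non-trivial successor relation between clusters must persist, ruling out full parallelization of the overall migration.

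For the dichotomy at the end of the Fact I would invoke item~(5), i.\,e.\ Theorem~\ref{thm:large_depth}. Given any width bound $w(n') \in o(n')$, such as $w(n') = \log n'$, the theorem yields $d(G) \geq n'/w(n')$, so if every single migration step involves only a modestly sized cluster with few downstream dependencies, then the canonical strategy must have many layers\,--\,this is the ``many successive steps'' branch. Contrapositively, if $d(G)$ is small, the width bound $w(n')$ cannot be small either, so some cluster $c(v)$ must satisfy $\#(c(v)\to^k V/c) > w(n')$ for some $k$, meaning the corresponding migration step has to be orchestrated with a large number of interdependent downstream components\,--\,this is the ``at least one particularly difficult step'' branch.

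The main obstacle, and the reason the Fact is stated informally rather than as a theorem, is that a fully rigorous version would have to reconcile three different modeling regimes at once: the uniform random partition model of Section~\ref{sec:set_partitions} governing the number and sizes of migration clusters, the Erdős–Rényi style model of Section~\ref{sec:migration_hard:condensation_graph} governing the condensation, and the essentially adversarial worst-case trade-off of Theorem~\ref{thm:large_depth}. Choosing thresholds for ``sufficiently large'', ``many'', and ``particularly difficult'' that make the five corollary items hold jointly with high probability requires careful quantifier bookkeeping; my plan is to invoke each corollary item under its own natural asymptotic regime and then combine the consequences qualitatively, which matches the level of rigor appropriate for the informal label \emph{Fact}.
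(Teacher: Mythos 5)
Your proposal is correct and follows essentially the same route as the paper: the Fact is presented there as an informal restatement of Corollary~\ref{thm:hard_migration}, with items~(2)--(4) supplying the ``many dependent, non-parallelizable steps'' part and item~(5) (Theorem~\ref{thm:large_depth}) supplying the depth-versus-width dichotomy, exactly as you argue. Your additional remarks on reconciling the three modeling regimes match the assumptions the paper itself revisits in its Discussion subsection, so there is nothing substantively different to compare.
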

The statement tells us that a ``random'' migration project exhibits a certain complexity. We revise the assumptions made: The first assumption is that in any practical migration project, partitions (i.\,e., clustering) of components appear uniformly from the set of all partitions. Another assumption is that the structure of the condensation graph follows the Erdős–Rényi model, i.\,e., that dependencies appear with a fixed probability independently of each other. Finally, our third assumption is that the two modeling steps are structurally independent. All assumptions seem valid for the generic complexity assessment of the model. In practice, of course, we do not have asymptotic objects or expected behaviors of concrete migration graph examples. Nevertheless, the mathematical model provides a sound basis for understanding the structure of migration projects in general.
We summarize these results with a numerical evaluation of the relevant estimates in Table~\ref{tab:numerical_behavior}. In the table, we compute the numerical outcomes for different migration projects with $n = 10^i$ components for $i \in \Set{1, \dots, 9}$, give the expected size of the migration clusters, the expected number of migration clusters with its standard deviation and numerical values of the expected depth of the graph for different bounds of the width $\#(c(v) \to^k V/c) \leq w(n)$.


\section{Migration Modelling}
\label{sec:migration-modelling}

We will now describe some practical considerations when modelling the migration problem using a migration graph.

\subsection{Graph Models}

Today's large scale IT infrastructures are both complex and highly diverse, and examining models to create a suitable graph from a
real-world IT infrastructure is the subject of ongoing research.
The aggregation of the
necessary data that is used to derive the graph is highly individual to the systems that should be migrated and specific to the given migration goal. We state a few concrete examples for such migration contexts:
\begin{description}
\item[Network migration:]
When thinking about migrating a network, it is intuitive to think of
the components to represent computers, servers, and similar devices and their dependencies. The edges would then, e.\,g., model communication paths for specific (cryptographic) protocols. Depending on the migration goals, all connections between machines may be interesting regardless of the specific protocol, illustrating all the relations between machines in the network. Depending
on the objective, it can also be necessary to study several graphs\,--\,for example
individual graphs for protocols like TLS/HTTPS, SSH, or IPsec/IKEv2.
In general, we expect that the migration goal is typically to replace outdated cryptographic protocols with more modern versions. We elaborate on this in the next section and study a concrete example in Section~\ref{sec:real-world-examples:gitlab}.
\item[System migration:]
Different graphs may be modeling a specific system. For instance, the migration components could be a set of software artifacts within an operating system and the edges would model the dependencies between these artifacts, similar to SBOMs. We expect that the goal in such a case would be to identify those software components that need to be updated to support current cryptographic standards, e.\,g., by switching out libraries for newer version or recompiling some software with updated source code to address vulnerabilities. 
\item[Full-scale migration:]
For a system administrator, a tool focusing on the high-level network view may be enough, while programmers may profit from call-graph-like models. A full view on all abstraction levels may be provided by a combination of different graphs, e.\,g., a component being linked to another graph depicting the same component in a different context. This leads to several graph models of the same infrastructure each with its specific perspective. On the abstract level, an administrator might identify a node in the network that uses an outdated protocol using the ``network migration'' view. On the detailed level, the administrator could then identify the concrete software that uses the outdated protocol and the dependency of that software to the outdated library that provides it using the ``system migration'' view. Such multi-level migration graphs are further discussed with a concrete example in Section~\ref{sec:real-world-examples:multi-level}.
\end{description}
Note that when we consider classical, naturally grown enterprise networks, the graphs will probably be both large and complex. Also, they will have many interdependent systems. When we look at a snap-shot of a migration graph of such an infrastructure, the design choices for modeling real-world IT infrastructure graphs are vast and we discovered a variety of possible ways to model such IT infrastructures.
A reasonable assumption might be that the best strategy is to model only what is vital for the given migration and leave out all other information. Examining the exact choices, however, deserves its own study and will be considered in future work.

Independent of the chosen model, the resulting graph enables us to assess quantitative measures easily, such as fan-in and fan-out behavior of certain nodes, the actual complexity of the single migration steps in terms of the sizes of the corresponding migration cluster, the distribution of those sizes, longest paths through the graph, and many more. Also any graph optimization technique available can be potentially applied to the graph, opening the migration problem to a plethora of algorithmic analysis possibilities.

For instance, one can explore minimum vertex covers or feedback arc set formulations to identify minimal sets of components whose migration breaks critical dependency cycles. Alternatively, maximum flow algorithms may be used to model and optimize parallelizable migration paths under budget or risk constraints. Moreover, subgraph isomorphism could help detect known migration patterns or similarities within different parts of large migration graphs or help to identify frequent patterns in migration projects.

\subsection{Cryptographic Agility}
\label{sec:real-world-examples:crypto-agility}

In practice, well constructed applications which follow software design principles and adhere to best practices to minimize technical debt help to make migrations easier. Key principles include loose coupling (e.\,g., between classes, packages, modules, libraries), broad but shallow microservices to minimize latency, or clear, human-readable, and enforceable regulations. These principles lead to a robust infrastructure with fewer direct or deep dependencies, which in turn mitigates the problems of the migration process.
This is particularly relevant for network protocols, where different versions can be implemented to support both new and old cryptographic algorithms. Such designs clearly reduce the depth of dependencies in an application, as they allow for parallel migrations of different components, rather than requiring a sequential migration of many components.
Mutual cryptographic dependencies between communicating parties can be classified into three categories:

\begin{enumerate}
\item \textit{Negotiable cryptography}.
Both endpoints implement a mechanism to support algorithm negotiation (e.\,g., via a suitable cipher-suite selection as in TLS). New algorithms can be introduced without discontinuing legacy ones, so upgrades remain fully backward-compatible.

\item \textit{Unilaterally fixed / extendable cryptography}.
Client implementations employ a single, fixed algorithm, whereas servers advertise a fixed \textit{set} of algorithms that can be augmented over time. Compatibility is preserved as long as the server continues to offer every algorithm required by current clients.

\item \textit{Bilaterally fixed cryptography}.
Each component is hard-wired to a specific algorithm.
Any change therefore requires a lock-step migration of all participating components to avoid interoperability loss.
\end{enumerate}

Although dependency management is context-sensitive, this taxonomy generalizes beyond network communication. In any domain, a dependency that can be evolved while maintaining backward compatibility allows to break up migration clusters: individual components may adopt the new version independently, reducing coordination overhead and operational risk.

The need to update security procedures offers a strategic opportunity to strengthen overall system security. Administrators as well as developers and system architects should follow established best practices and standards that embed three complementary properties into the architecture~\cite{naether24sok}:

\begin{itemize}
  \item \textit{Cryptographic agility}. The ability to replace or deprecate algorithms and key sizes rapidly in response to emerging threats or standard revisions.
  \item \textit{Cryptographic versatility}. The capacity to introduce new primitives or deployment models without large-scale structural rewrites.
  \item \textit{Cryptographic interoperability}. The assurance that heterogeneous components continue to interoperate correctly after any cryptographic change.
\end{itemize}


For system administrators, classic software engineering maxims\,--\,fewer dependencies, loose coupling, and well-defined, crypto-negotiable interfaces\,--\,ease the way to cryptographic agility. When these properties are engineered together, the system can evolve seamlessly while maintaining a consistently high security maturity level. We explore these typical patterns in the following.

\subsection{Local Dependency Patterns}
\label{sec:real-world-patterns}

\begin{figure*}[t]
  \centering
  \begin{subfigure}[b]{0.19\textwidth}
    \centering
    \scalebox{.9}{%
      \begin{tikzpicture}
        \node[minimum height=5cm, inner sep=0pt] (bigNode) {
          \begin{tikzpicture}[auto,
              > = Stealth,
              every edge quotes/.style = {font=\footnotesize}, 
              every edge/.append style = {->, draw=bluelight, thick},
              every loop/.append style = {<-, looseness = 12},
              node distance = 15mm,
              state/.style = {circle, semithick, fill=bluelight, draw=bluelight, text=white, minimum size=2.5em},
              initial text = ,]

            \node (A) [state] {$v$};
          \end{tikzpicture}};
      \end{tikzpicture}}
    \caption{ }\label{fig:case1}
  \end{subfigure}
  \begin{subfigure}[b]{0.19\textwidth}
    \centering
    \scalebox{.9}{%
      \begin{tikzpicture}
        \node[minimum height=5cm, inner sep=0pt] (bigNode) {
          \begin{tikzpicture}[auto,
              > = Stealth,
              every edge quotes/.style = {font=\footnotesize}, 
              every edge/.append style = {->, draw=bluelight, thick},
              every loop/.append style = {<-, looseness = 12},
              node distance = 15mm,
              state/.style = {circle, semithick, fill=bluelight, draw=bluelight, text=white, minimum size=2.5em},
              initial text = ,]

            \node (L) [state] {$v_{1}$};
            \node (K) [state, below=.9cm of L] {$v_{2}$};
            \node (J) [state, below=.9cm of K] {$v_{3}$};
            \node (I) [state, right=of K] {$w$};

            \path   (J) edge (I)
            (K) edge (I)
            (L) edge (I)
            ;
          \end{tikzpicture}};
      \end{tikzpicture}}
    \caption{ }\label{fig:case2}
  \end{subfigure}
  \begin{subfigure}[b]{0.2\textwidth}
    \centering

    \scalebox{.9}{%
      \begin{tikzpicture}
        \node[minimum height=5cm, inner sep=0pt] (bigNode) {
          \begin{tikzpicture}[auto,
              > = Stealth,
              every edge quotes/.style = {font=\footnotesize}, 
              every edge/.append style = {->, draw=bluelight, thick},
              every loop/.append style = {<-, looseness = 12},
              node distance = 15mm,
              state/.style = {circle, semithick, fill=bluelight, draw=bluelight, text=white, minimum size=2.5em},
              initial text = ,]

            \node (F) [state] {$v$};
            \node (C) [state, right=of F] {$w_2$};
            \node (B) [state, above=.7cm of C] {$w_1$};
            \node (D) [state, below=.7cm of C] {$w_3$};

            \path   (F) edge (B)
            (F) edge (C)
            (F) edge (D)
            ;
          \end{tikzpicture}};
      \end{tikzpicture}}
    \caption{ }\label{fig:case3}
  \end{subfigure}
  \begin{subfigure}[b]{0.19\textwidth}
    \centering
    \scalebox{.9}{%
      \begin{tikzpicture}
        \node[minimum height=5cm, inner sep=0pt] (bigNode) {
          \begin{tikzpicture}[auto,
              > = Stealth,
              every edge quotes/.style = {font=\footnotesize}, 
              every edge/.append style = {->, draw=bluelight, thick},
              every loop/.append style = {<-, looseness = 12},
              node distance = 15mm,
              state/.style = {circle, semithick, fill=bluelight, draw=bluelight, text=white, minimum size=2.5em},
              initial text = ,]

            \node (E) [state] {$v$};
            \node (F) [state, below=of E] {$w$};

            \path   (F) edge [bend left=10] (E)
            (E) edge [bend left=10] (F)
            ;
          \end{tikzpicture}};
      \end{tikzpicture}}
    \caption{ }\label{fig:case4}
  \end{subfigure}
  \begin{subfigure}[b]{0.19\textwidth}
    \centering
    \scalebox{.9}{%
      \begin{tikzpicture}
        \node[minimum height=5cm, inner sep=0pt] (bigNode) {
          \begin{tikzpicture}[auto,
              > = Stealth,
              every edge quotes/.style = {font=\footnotesize}, 
              every edge/.append style = {->, draw=bluelight, thick},
              every loop/.append style = {<-, looseness = 12},
              node distance = 15mm,
              state/.style = {circle, semithick, fill=bluelight, draw=bluelight, text=white, minimum size=2.5em},
              initial text = ,]

            \node (D) [state] {$v_1$};
            \node (B) [state, above right=.9cm of D] {$v_2$};
            \node (C) [state, below right=.9cm of D] {$v_3$};
            \path   (D) edge[bend left=10] (B)
            (B) edge[bend left=10] (C)
            (C) edge[bend left=10] (D)
            ;
          \end{tikzpicture}};
      \end{tikzpicture}}
    \caption{ }\label{fig:case5}
  \end{subfigure}


  \caption{Dependency sub-graphs of real-world migration attempts: \subref{fig:case1} Single component with no dependencies. \subref{fig:case2} Many components depend on a single component. \subref{fig:case3} One component depends on many components. \subref{fig:case4} Mutual dependencies between two components. \subref{fig:case5} Higher-order cycles.}\label{fig:real-world-patterns}
\end{figure*}

When analyzing the occurrence of graph patterns in the real world, we begin
with the kinds of \emph{typical} dependency patterns we observe in practice. To do
so, we depicted different base-patterns from which the graph can be constructed
in Fig.~\ref{fig:real-world-patterns}. In the following, we will express real-world
observations related to these patterns and discuss their semantics and meaning for our
introduced theory. Also, we briefly reflect on their impact on post-quantum migration.

\subsubsection{API-compatible change of algorithms (Fig.~\ref{fig:case1})}
When migrating an isolated (ge\-ne\-ric) technological
component $v$, a simple change in the code or even an update of the relevant software
will successfully migrate the component. This corresponds to the case in
Fig.~\ref{fig:case1}. In the context of post-quantum migration this could be a simple
of the key-length of symmetric ciphers such as AES~\cite{nist-sp-800-57-pt1-r5}.

\subsubsection{Operating System (OS) dependencies (Fig.~\ref{fig:case2})}
Many security-critical applications ($v_1$, $v_2$, $v_3$, $\dots$) depend on a single OS ($w$) to provide core functionality such as cryptographic services, driver support, or security enforcement. Migration is constrained when the OS lacks support for updated cryptographic mechanisms. A key example is the transition to post-quantum secure boot: before applications relying on verified boot integrity can migrate, the OS must first support PQC-based firmware and kernel signature verification. This highlights how OS constraints can delay cryptographic migration across multiple dependent components.

\subsubsection{Regulatory requirements (Fig.~\ref{fig:case3})}
When regulatory requirements $w_1$, $w_2$, $w_3$, $\dots$ impose restrictions on the migratability of certain technical components $v$, only a change in the regulatory requirements enables us to actually migrate. Changing regulatory requirements, however, often depends on many further, often non-technical, components, such as laws, standardization, or technological standards in general. This directly applies to post-quantum migration as well, when we want conformance to one of the PQC NIST standards such as FIPS 203~\cite{fips203} for example.

\subsubsection{Communication protocols (Fig.~\ref{fig:case3}/\ref{fig:case4})}
When migrating communication protocols from one version to another, it is vital that all communicating parties in the migration cluster are migrated at the same time. This is in particular a problem when there are many parties involved and only one part of the IT infrastructure may be updated conveniently. It should be noted that such dependencies typically do not form transitive clusters, since cipher suites can be agreed upon by two respective communicating parties only. In the context of post-quantum migration this could be the replacement of TLS by a post-quantum-ready version. 

\begin{figure*}[t]
  \hspace*{-0.5cm}
  \resizebox{\textwidth}{!}{\usetikzlibrary{positioning,arrows.meta,shapes.symbols,fit,calc}

\tikzset{
  every node/.style={font=\ttfamily},
  svcbox/.style={
    draw, rounded corners, thick,
    minimum width=3.2cm, minimum height=1cm,
    align=center
  },
  svccloud/.style={
    draw, shape=cloud, thick,
    cloud puffs=12, cloud ignores aspect,
    minimum width=3cm, minimum height=1.8cm,
    align=center,
    font=\rmfamily
  },
  arr/.style={-{Latex[length=3mm,width=2mm]}, thick},
  proto/.style={font=\small, midway, fill=white, inner sep=1pt, align=center}
}

\begin{tikzpicture}

\node[svccloud] (web) {Web\\Clients};
\node[svccloud, right=5cm of web] (ssh) {SSH\\Clients};

\node[svcbox, below=2cm of web]          (nginx)   {Nginx\\(cluster)};
\node[svcbox, below=2.5cm of nginx]        (workhorse) {Workhorse\\(cluster)};
\node[svcbox, left =8cm of nginx]       (pages)   {Pages\\(cluster)};
\node[svcbox, right=5cm of workhorse]      (shell)   {Shell\\(cluster)};

\node[svcbox, below=2.5cm of workhorse, xshift=-2.8cm] (puma)   {Puma\\(Rails cluster)};
\node[svcbox, below=2.5cm of workhorse, xshift= 2.8cm] (sidekiq){Sidekiq\\(jobs cluster)};

\node[svcbox, below=2.5cm of puma,   xshift=-2.8cm] (gitaly){Gitaly\\(cluster + praefect)};
\node[svcbox, below=4cm of sidekiq,xshift=4.5cm, yshift=-2cm] (redis) {Redis Cluster\\ (cache / sessions)};


\node[svcbox, left=2cm of redis] (postgres){PostgreSQL\\(managed)};
\node[svcbox, below=2.5cm of redis] (registry){Container\\Registry};
\node[svcbox, right=3cm of registry] (runner)  {GitLab Runner\\(autoscale)};

\node[svcbox, right=2cm of shell] (prom){Prometheus};
\node[svcbox, below=2cm of prom]  (alert){Alertmanager};
\node[svcbox, above=2cm of prom] (grafana){Grafana};

\node[svcbox, below=2.5cm of pages] (ldap){LDAP / SAML};
\node[svcbox, below=2cm of ldap] (smtp){SMTP Server};
\node[svcbox, left=2.5cm of registry] (objectstore){Storage\\(S3 / MinIO)};

\draw[arr] (web) -- node[proto, yshift=0.2cm]{HTTP\\TLS} (nginx);
\draw[arr] (ssh) -- node[proto]{SSH}          (shell);

\draw[arr] (nginx) -- node[proto] {HTTP\\TLS} (workhorse);
\draw[arr] (nginx) to[out=200,in=20] node[proto]{HTTP\\TLS} (pages);
\draw[arr] (shell) to[out=200,in=20] node[proto, xshift=-0.5cm]{API\\HTTP\\TLS} (workhorse);

\draw[arr] (workhorse) to[out=195,in=90] node[proto]{HTTP\\TLS} (puma.north);
\draw[arr] (workhorse) to[out=345,in=90] node[proto]{HTTP\\TLS} (sidekiq.north);

\draw[arr] (puma)    -- node[proto]{gRPC\\8075/TLS}      (gitaly);
\draw[arr] (sidekiq) -- node[proto, xshift=-1.8cm, yshift=-0.7cm]{gRPC\\8075/TLS}      (gitaly);

\draw[arr] (puma.east)    to[out=10,in=190]  node[proto]{Redis\\6379/TLS} (redis.west);
\draw[arr] (sidekiq) -- node[proto]{Redis\\6379/TLS}                 (redis);

\draw[arr] (puma.south)    -- node[proto]{PostgreSQL\\5432/TLS} (postgres.north);
\draw[arr] (sidekiq) to[out=-90,in=20] node[proto, yshift=-.2cm]{PostgreSQL\\5432/TLS} (postgres);

\draw[arr] (gitaly.east)  to[out=0,in=180] node[proto]{S3\\HTTP\\TLS} (objectstore.west);

\draw[arr] (registry.west) to node[proto]{S3\\HTTP\\TLS} (objectstore.east);

\draw[arr] (runner.west)  to node[proto]{HTTP\\TLS}    (registry.east);
\draw[arr] (runner.north) to[out=140,in=-50] node[proto]{SSH} (shell.south);
\draw[arr] (runner.north) to[out=140,in=-30] node[proto, xshift=1cm, yshift=-2.7cm]{HTTPS\\TLS} (nginx.east);

\draw[arr] (prom)          -- node[proto]{Alerts\\HTTP\\TLS} (alert);
\draw[arr] (grafana.south) -- node[proto]{Queries\\HTTP\\TLS} (prom);

\draw[arr] (prom.north) to[out=160,in=0] node[proto, yshift=0.5cm, xshift=-2cm]{/metrics\\HTTP\\TLS} ($(nginx.east)+(0,0.2cm)$);

\draw[arr] (puma.west) to[out=180,in=0]    node[proto]{LDAP\\636/TLS}  (ldap.east);
\draw[arr] (puma.west) to[out=200,in=30]   node[proto]{SMTP\\465/TLS}  (smtp.east);


\node[svcbox, below=2.2cm of alert]  (vault){Vault};
\node[svcbox, below=1.9cm of vault]  (cert){Cert-Manager};

\draw[arr] (puma.north) to[out=45,in=160] node[proto, xshift=2.5cm, yshift=-1.2cm]{HTTP\\TLS} (vault.west);
\draw[arr] (cert.north) to node[proto]{HTTP\\TLS} (vault.south);

\node[draw, dashed, draw=bluelight, text=white, rounded corners, fit=(nginx) (pages), inner sep=7mm, label=below:\Large$C_{~nginx}$] {};

\node[draw, dashed, draw=bluelight, text=white, rounded corners, fit=(workhorse) (puma) (sidekiq) (gitaly), inner sep=7mm, label={[xshift=-1cm]below:\Large$C_{~core}$}] {};

\node[draw, dashed, draw=bluelight, text=white, rounded corners, fit=(prom) (alert) (grafana), inner sep=7mm, label=below:\Large$C_{~mon}$] {};

\node[draw, dashed, draw=bluelight, text=white, rounded corners, fit=(vault) (cert), inner sep=7mm, label=below:\Large$C_{~id}$] {};

\node[draw, dashed, draw=bluelight, text=white, rounded corners, fit=(ldap) (smtp), inner sep=7mm, label=below:\Large$C_{~aux}$] {};

\node[draw, dashed, draw=bluelight, text=white, rounded corners, fit=(redis) (postgres), inner sep=7mm, label=below:\Large$C_{~ds}$] {};

\node[draw, dashed, draw=bluelight, text=white, rounded corners, fit=(objectstore) (registry) (runner), inner sep=7mm, label=below:\Large$C_{~cicd}$] {};

\node[draw, dashed, draw=bluelight, text=white, rounded corners, fit=(shell), inner sep=7mm, label=below:\Large$C_{~ssh}$] {};

\node[minimum width=15cm, below=of smtp, yshift=-4.8cm, xshift=3cm] (legende) {%
\begin{tabularx}{12cm}{m{1.5cm} m{4cm} X}
\toprule
\textrm{Cluster} & \textbf{Name (Size)} & \texttt{Services} \\
\arrayrulecolor{black}\midrule\arrayrulecolor{black}
$C_{~nginx}$ & \textbf{Nginx Ingress (2)} & \texttt{Nginx}, \texttt{Pages} \\
$C_{~core}$  & \textbf{Core App (4)} & \texttt{Gitaly}, \texttt{Puma}, \texttt{Sidekiq}, \texttt{Workhorse}\\
$C_{~mon}$   & \textbf{Monitoring (3)} & \texttt{Alertmanager}, \texttt{Grafana}, \texttt{Prometheus} \\
$C_{~id}$    & \textbf{Identity (2)} & \texttt{Cert-Manager}, \texttt{Vault}\\
$C_{~ds}$    & \textbf{Data Stores (2)} & \texttt{PostgreSQL}, \texttt{Redis Cluster} \\
$C_{~aux}$   & \textbf{Aux. Services (2)} & \texttt{LDAP}, \texttt{SMTP} \\
$C_{~cicd}$  & \textbf{CI/CD (3)} & \texttt{Container Registry}, \texttt{GitLab Runner}, \texttt{Storage} \\
$C_{~ssh}$   & \textbf{SSH Services (1)} & \texttt{Shell} \\
\bottomrule
\end{tabularx}};

\end{tikzpicture}}%

  \caption{A simplified representation of a migration graph for \gitlab; for the sake of visual clarity, several ancillary communication edges, such as specific interactions with downstream services, have been intentionally omitted.\label{fig:example_real-world_graph}}
\end{figure*}

\subsubsection{Dependency on technical requirements (Figs.~\ref{fig:case4}/\ref{fig:case5})}
Occasionally, there are requirements on the amount of memory, the capabilities of computing platforms, or network throughput restrictions. In the context of post-quantum migration this applies when, e.\,g., the key sizes of post-quantum algorithms are much larger than those of classical algorithms. To successfully migrate, the additional computational resources must be provided.

\subsubsection{Dependencies of software package distribution for, e.\,g., operating systems, update management or software development (Figs.~\ref{fig:case1}/\ref{fig:case2}/\ref{fig:case3})}
When considering software packages and their dependencies, the graph formed is a directed acyclic graph with no mutual dependencies. This graph represents combinations of cases illustrated in Figs.~\ref{fig:case1},~\ref{fig:case2}, and~\ref{fig:case3}. The migration of such a case can be done corresponding to the canonical topological order of this graph. In the context of post-quantum migration this could be the replacement of cryptographic libraries with post-quantum-ready versions.

\subsubsection{Cyclically depending assets (Fig.~\ref{fig:case5})}
If, e.\,g., three \emph{cryptographic assets} are in a dependency relationship with each other, functionality issues may occur when a single asset is migrated~\cite{etsi24cyberqsc}. The functionality issues might cascade and affect all assets in the circle, even the migrated one. The problem is mitigated in our model as we assume that such circular clusters of assets can only be migrated when all assets are migrated at the same time. In the context of post-quantum migration this might happen, e.\,g., when we observe circularly cross-signed certificates.
Circular cases are considered rare in practice, because they create dependencies that are avoided in software architecture and engineering. However, studies show that they still occur in several cases e.\,g., when looking at class dependencies~\cite{Melton2007classcycles} or microservice dependencies\cite{Genfer2021cyclicmicroservices}. Developers can address those kinds of dependencies e.\,g. via Anti-Corruption Layers~\cite{Evans2004dddACL}. 


\subsubsection{Further patterns}
Although these patterns are complete from a graph-theoretic point of view, e.\,g., any directed graph can be (non-uniquely) decomposed into sub-graphs with exactly the patterns from Fig. ~\ref{fig:real-world-patterns}, it is important to note that each of them must be handled differently in each context. For example, a circular dependency for software packages would be resolved by updating all software components at once. A circular regulatory requirement update might be resolved by changing the law, then a standard, then a communication protocol and impose further regulatory requirement updates.
We encourage the inclined reader to find more naturally occurring real-world examples. Using these base-patterns, a wide range of possible security scenarios can not only be modeled, but also described and simplified.


\section{Real-World Examples}
\label{sec:real-world-examples}

Migration graphs can help in understanding the complexity of the system involved and in considering the time needed to migrate, but also quantitatively specify the number of components and the various dependencies. As we have seen, these dependencies can exist between single components, between interdependent clusters of components, and between external factors such as regulatory requirements, legal issues, or third-party components. The complexity of migrations requires expert knowledge and sufficient dedicated resources, resulting in significant costs.
In the following, we explore two practical examples\,--\,one from software development and one for critical infrastructures\,--\,and explain how a potential migration graph model could look like and how more complex systems might require a multi-layered approach.

Given the general structure of practical migration, we believe our model can significantly support and enhance existing migration strategies and tools, particularly in the standardization of migration projects through guidelines, rules, and best practices. These models can optimize the process, provide practical examples for guidance, and reduce cost and effort. This potentiall can help system administrators to make large migration projects easier to plan and conduct.

\subsection{Migration Graphs for Open-Source Software Projects}
\label{sec:real-world-examples:gitlab}


We investigate the migration of an open-source project to PQC using the CI/CD DevOps pipeline of the web-based DevOps platfrom \gitlab\footnote{\url{https://perma.cc/X7DC-SYYL}} as a representative case. The \gitlab platform comprises numerous subsystems that exchange data over cryptographically protected channels\,--\,primarily HTTPS/TLS and SSH\,--\,as well as auxiliary services such as LDAP, SMTP, and RPCs (cf.\ Fig.~\ref{fig:example_real-world_graph}). The figure reflects the architecture of a large-scale production deployment of \gitlab serving a substantial user base: vertices correspond to individual components, while directed edges denote protocol-level dependencies. By construction, the resulting graph is a directed acyclic graph. However some of the connections can be considered two-way, when for example cryptographic key-exchanges are performed. Grouping these components semantically yields canonical migration clusters.
In this scenario, the components no longer communicate over Unix Sockets (the default for \gitlab), but rather TCP-Sockets, allowing \gitlab to run as a distributed system on different machines. For communication different cryptographic protocols. Although the graph has been slightly simplified for sake of exposition, it still contains most of the relevant components and their dependencies.

If we inspect Fig.~\ref{fig:example_real-world_graph}, we can see that there are several chains of dependent services, e.\,g., \texttt{WebClients} $\Rightarrow$ \texttt{Nginx} $\Rightarrow$ \texttt{Workhorse} $\Rightarrow$ \texttt{Puma} $\Rightarrow$ \texttt{PostgreSQL}. When leaving out the external WebClients, we see that the longest path through the graph is 4 hops long, i.\,e., has length 3. This means that, when strictly following the rules of the migration model proposed, we would have to migrate at least 4 components in a row. The same applies to the other paths through the graph, e.\,g., SSH Clients $\Rightarrow$ \texttt{Shell} $\Rightarrow$ \texttt{Gitaly} $\Rightarrow$ \texttt{Object Storage}.
However, in several practical cases, such as networking, depth dependencies may be divided into migration clusters, only considering direct neighbors. This is because we can upgrade TLS nodes to a higher version (or a different cipher-suite) for directly communicating components, independent of the other components. Eventually, all services will speak the new version of TLS and the migration has been fully carried out. This fits exactly to our formal model and illustrates that in this scenario, some depth dependencies can be converted into migration clusters for successive migration of the protocols involved.

We now present a detailed analysis of the migration clusters within the \gitlab architecture as defined in the lower left table in Fig.\ref{fig:example_real-world_graph}. Each cluster groups together services that share common dependencies and are likely to be migrated as a unit. Units are grouped by their functionality and the dependencies between them. The clusters are named after the main purpose they serve, and the number of services in each cluster is indicated in parentheses. As mentioned in Section~\ref{sec:real-world-examples:gitlab}, in specific cases, the migration of dependencies may be split into parallel dependencies, which can be migrated independently. 
The graph of Fig.~\ref{fig:example_real-world_graph} induces the dependencies between those clusters and the corresponding migration graph is given in Fig.~\ref{fig:tikz-migration-graph}.
\begin{figure}[t]
  \centering
  \resizebox{0.98\columnwidth}{!}{\usetikzlibrary{positioning,arrows.meta}

\tikzset{
 svcbox/.style={
    draw, rounded corners,
    minimum width=1cm, minimum height=1cm,
    align=center
  },
  arr/.style={-{Latex[length=3mm,width=2mm]}, thick},
}

\begin{tikzpicture}

\node[svcbox] (core) {$C_{~core}$};

\node[svcbox, left=of core] (ds) {$C_{~ds}$};
\node[svcbox, above= of ds, yshift=-0.5cm] (id) {$C_{~id}$};
\node[svcbox, below= of ds, yshift=0.5cm] (aux) {$C_{~aux}$};

\node[svcbox, right=of core, yshift=1cm] (ssh) {$C_{~ssh}$};
\node[svcbox, below= of ssh] (nginx) {$C_{~nginx}$};
\node[svcbox, right=of nginx] (mon) {$C_{~mon}$};
\node[svcbox, right=of ssh] (cicd) {$C_{~cicd}$};

\draw[arr] (core) -- (ds);
\draw[arr] (core) -- (id);
\draw[arr] (core) -- (aux);

\draw[arr] (nginx) -- (core);
\draw[arr] (ssh) -- (core);

\draw[arr] (mon) -- (nginx);
\draw[arr] (cicd) -- (ssh);
\draw[arr] (cicd) -- (nginx);

\end{tikzpicture}}%

  \caption{Condensed migration graph of the \gitlab system from Fig.~\ref{fig:example_real-world_graph}.}
  \label{fig:tikz-migration-graph}
\end{figure}

%
%
We can now measure empirically the predicted properties from Table~\ref{tab:numerical_behavior} for this graph, such as the number of clusters, the mean of the cluster sizes and its standard deviation, as well as the depth of the condensation (cf.\ Table~\ref{tab:migration_clusters_metrics}). We note that the result fits well to the theoretical predictions.

\begin{table}[h!]
\caption{Comparison of theoretical predictions from Table~\ref{tab:numerical_behavior} and the observed migration graph metrics for $n=19$.}
\label{tab:migration_clusters_metrics}
\centering
\begin{tabularx}{\columnwidth}{X*{3}{>{\centering\arraybackslash}m{1.2cm}}}
\toprule
\makecell{\textbf{Metric}} &
\makecell{\textbf{Formula}\\\textbf{(Table~\ref{tab:numerical_behavior})}} &
\makecell{\textbf{Expected}\\\textbf{($n=19$)}} &
\makecell{\textbf{Observed}\\\phantom{\textbf{($n=19$)}}}\\
\arrayrulecolor{black}\midrule\arrayrulecolor{black}
Cluster size $\mathsf{E}(\#c(v))$ & $\log n$ & 2.94 & $2.38$ \\
\arrayrulecolor{black!20}\midrule\arrayrulecolor{black}
Cluster count $\mathsf{E}(\#(G/c))$ & $\dfrac{n}{\log n}$ & 6.45 & 8 \\
\arrayrulecolor{black!20}\midrule\arrayrulecolor{black}
Std.\ dev.\ $\mathsf{SD}(\#(G/c))$ & $\dfrac{\sqrt{n}}{\log n}$ & 1.48 & $0.92$ \\
\arrayrulecolor{black!20}\midrule\arrayrulecolor{black}
Migration depth (optim.) & $\sqrt{n}$ & 4.36 & 3 \\
\arrayrulecolor{black!20}\midrule\arrayrulecolor{black}
Migration depth (cons.) & $\dfrac{n}{\log^{2} n}$ & 2.19 & 3 \\
\bottomrule
\end{tabularx}
\end{table}

In both, the original and the clustered graph, we observe patterns like the ones described in Fig.~\ref{fig:real-world-patterns}. Focusing on the condensed graph in Fig.~\ref{fig:tikz-migration-graph}, we can see that $C_{~core}$ corresponds to the fan-in pattern (Fig.~\ref{fig:case2}), with cluster $C_{~ssh}$ and $C_{~nginx}$, but also to the fan-out pattern (Fig.~\ref{fig:case3}), with cluster $C_{~id}$, $C_{~ds}$, and $C_{~aux}$. Depending on how you actually perform the migration, a connection from one cluster to another or even one service to another may be seen as mutual dependencies as in Fig.~\ref{fig:case4}. A single node dependency like in Fig.~\ref{fig:case1} may be found in the $C_{~ds}$ cluster, where \texttt{PostgreSQL} uses encryption for its data at rest, which is a single point of migration. A circular dependency can not be found in the graph, as this is avoided in the architecture of \gitlab. However, it is important to note that circular dependencies can occur in other systems, and our model allows for them as well.
%
%
\gitlab already has an architecture which is well designed in terms of cryptographic agility, as explained in Section~\ref{sec:real-world-examples:crypto-agility}. Nevertheless, the graph model not only analyses how to plan a PQC migration but also offers a concrete optimization target: design for minimal depth and small, acyclic clusters so that future cryptographic transitions become routine maintenance tasks rather than multi-year projects.

\subsection{Multi-level Migration Graphs for Critical Infrastructure}
\label{sec:real-world-examples:multi-level}

Critical infrastructures are an especially important class of systems since their security has a direct impact on everyday life. They include healthcare, telecommunications, water supply, and many other examples. These systems can be particularly vulnerable and a number of attacks on them have been reported~\cite{alcaraz2015critical}. At the same time, they can be very complex and are often distributed. We believe that migration graphs are therefore particularly useful tools for critical infrastructure when considering the evolution of the cryptographic mechanisms used in them.

As an example of critical infrastructure, we are going to discuss modern power distribution grids. Power grids are a critical infrastructure in any modern society and have been the target of cyber attacks, e.\,g., in the war between Russia and Ukraine\cite{Kostyuk2019cyberattacks}. The increased digitalization and the move towards more sophisticated monitoring infrastructure deployed in the grid in a distributed fashion opens many new attack paths~\cite{tatipatri2024comprehensive}. Tatipatri and Arun~\cite{tatipatri2024comprehensive} provide a taxonomy of possible attacks, ranging from injecting false data into measurement system to physical attacks on the equipment. Generally, the aim of such attacks is to disrupt the service of the power system and stop the flow of power to consumers, often by destroying the power equipment itself.

\begin{figure*}
  \includegraphics[width=\textwidth]{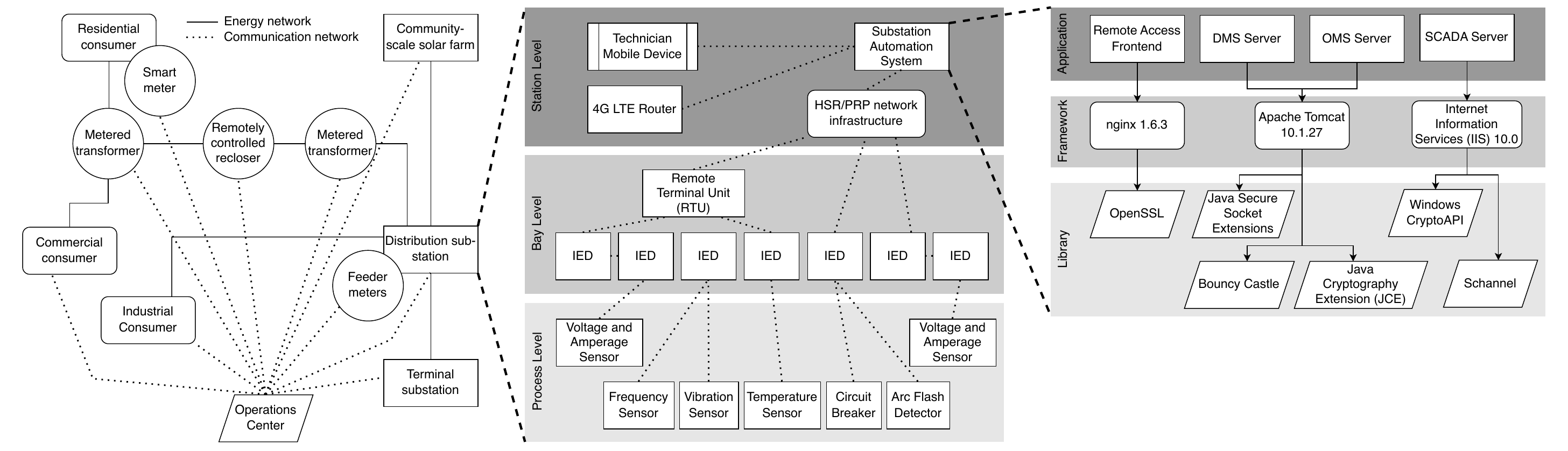}
  \caption{The multi-layered architecture in a distribution power grid. Each of the levels of the grid can be seen as a distinct migration context with its own migration graph. The terminology used in the figure is explained in more detail in the text.}
  \label{fig:power-grid-levels}
\end{figure*}

Fig.~\ref{fig:power-grid-levels} shows three levels of the architecture of the power distribution grid. On the left-hand side is the power distribution grid that consists of many physical devices that are located along the power lines. They include smart meters in homes, small pole-mounted transformers and reclosers (i.e., breakers that can be remotely controlled), more complex monitoring equipment for larger consumers and, crucially, distribution substations. All of these components are connected to an operation center, run by the grid provider, via different means, including public 4G networks, fiber optics, or specialized wireless technology such as WiMAX~\cite{ieee802.16e-2005}. This means that a migration has to consider this level and it is possible to draw a migration graph here.

On the substation level, we encounter a different, complex architecture. The substation automation system is connected to intelligent electronic devices (IEDs) via a high-redundancy network. The IEDs in turn are connected to the sensors and actuators that operate directly on the electrical equipment. Some IEDs are connected to remote terminal units (RTU) that combine their information and make it accessible to the substation automation system. At the same time, the distribution station can be serviced by a technician with a mobile device such as a laptop that can connect to the substation automation system via a wired connection or wirelessly. In our example, the substation can be reached remotely via a 4G LTE router. Substations in the power grid use specialized protocols described in IEC~61968~\cite{iec61968} and security mechanisms described in IEC~62351~\cite{iec62351}. Here, again, a migration graph can be drawn up to describe the migration that is necessary to migrate an entire substation to more modern cryptographic mechanisms as shown in the middle part of Fig.~\ref{fig:power-grid-levels}.

Finally, we can look at one concrete subsystem within the substation (right hand side of Fig.~\ref{fig:power-grid-levels}). As an example, the substation automation system consists of several servers that run on dedicated hardware. The Remote Access Frontend provides access to technicians or remote workers via a web interface. It uses \texttt{nginx} as a webserver which in turn uses OpenSSL to provide cryptographic mechanisms. The Distribution Management Server (DMS) deals with optimizing the electricity flow while the Outage Management System (OMS) kicks in if some equipment fails or the power flow is interrupted. They both run on an Apache Tomcat application server and use libraries that come with the Java Runtime Environment as well as the open source library Bouncy Castle to provide cryptographic functions. Finally, the Supervisory Control and Data Acquisition (SCADA) system which controls the overall processes and the data collection is based on Internet Information Services (IIS) and the crypto libraries embedded in the Microsoft Windows operating system. Again, migrating the substation automation system is a major task itself which can be described, analyzed, and planned with migration graphs.

Detailing the migration graphs for this complex example will be the topic of a forthcoming paper. What the power distribution system as a representative of a critical infrastructure shows, however, is the need to treat migration graphs as multi-layered graphs where each layer can be considered as one specific migration context. In such complex systems, a node in a migration graph can be decomposed to reveal an underlying migration problem in a different context that can be analyzed with the same tools (cf.~Section~\ref{sec:migration-modelling}). Decomposition and abstraction in this sense are powerful mechanisms to reduce the complexity of migration graphs. They allow splitting up the work; e.\,g., the substation automation system can be migrated by a contractor. But abstraction also allows simplifying the graphs for different purposes such as effort estimation or visualization. We will explore the power of this hierarchical multi-graph paradigm in future work.

In summary, managing the intrinsic complexity of large migration projects involves understanding the dependencies and strategically planning the different migration steps, while keeping the economic implications in mind. Supporting companies with standardized guidelines and open-source tools is crucial, as is fostering an agile approach to sustainable security solutions. By addressing these key areas, organizations can overcome migration challenges and maintain robust security within their IT infrastructures.

\section{Conclusion}
\label{sec:conclusion}

In this work, we use classical mathematical results from graph theory, probability theory, and combinatorial analysis to assess the difficulty of a very practical problem: cryptographic migration of large IT infrastructures. We formulated a formal model of cryptographic migrateability in terms of directed graphs and showed several expected properties of these graphs. This approach not only elevates the discussion to a scientific level but also establishes a foundation for further research on more efficient migration strategies. 

We showed that migration projects always have an intrinsic complexity due to many dependent, comparatively small migration clusters which successively depend on each other.

We also discussed the consequences of this complexity, by pointing out recurring migration patterns as well as single-layered and multi-layered, practical examples. We distinguish different migration contexts and discuss our work in the context of existing frameworks and standards for migrating cryptographic assets as well as in the context of cryptographic agility.

Our work sets the stage for future research aimed at refining these models, integrating them with real-world data, and ultimately creating more adaptive, user-friendly tools that support real-world migration. We therefore aim here to scientifically substantiate migration approaches. By emphasizing the relevance of this formal approach, we hope to support the development of more robust migration strategies and (open-source) tooling that can keep pace with the evolving landscape of cryptographic standards.

\subsection*{Future Work}
Still, there is a lot of work to be done when it comes to practical migration projects.
\begin{description}
  \item[Obtaining useful migration graphs in practice:] Modelling real-world IT infrastructures or relevant subsets will clearly be a challenging task. The resulting migration graphs should neither be trivial nor too vast. In addition, migration graphs change over time as the system evolves while the migration is ongoing.
  \item[Multi-layered migration graphs:] As demonstrated in our example of the power distribution grid, real-world systems have multiple layers which are defined by their migration contexts. It is worthwhile to treat these contexts separately, e.\,g., by looking at a single server as well as at the more complex network in which it is integrated. However, in practice, there are dependencies between the contexts that need to be studied in more detail.
  \item[Migration patterns:]  Applying our model to real-world migration projects will provide an empirical basis to identify specific, reoccurring cases that can be addressed with a set of migration patterns.
  \item[Migration strategies:] While we have hinted at possible strategies to migrate specific patterns in this paper, a more thorough technical analysis is required to make concrete suggestions about migration strategies that take the technical constraints (such as limited upgradeability, memory, or CPU power) into account.
  \item[Migration cost:]  As migration problems typically come with some kind of measurable cost per migration step, a refinement of our model would result in migration graphs where optimal migration strategies are not as easy to obtain as here.
  \item[Tooling:] Practical migrations are too large to handle with a purely manual approach. Our formal model serves as a starting point for tool-supported analysis of systems and tool-supported planning of migrations. We are already working on tooling that creates graph structures out of cryptographic inventories which will have the ability to calculate migrations graphs using the principles laid out in this paper.
\end{description} 

\ifanonymous
\else
\section*{Acknowledgment}

The authors thank Tobias J. Bauer for his valuable comments during the genesis of this work.
\fi


%

%
%
\bibliographystyle{IEEEtran}
\bibliography{\bibdir bibliography,\bibdir rfc}

\input{\sectiondir biographies.tex}

\end{document}